\newif\ifanon
\newif\ifconf
\newcommand{\agg}{\ensuremath{\mathcal{A}}}
\newcommand{\countnonzero}{{\sc CountNonZero}}
\newcommand{\Out}{\ensuremath{\textbf{Out}}}
\newcommand{\OS}{\ensuremath{\mathcal{O}}}
\newcommand{\x}{\ensuremath{\mathbf{x}}}
\newcommand{\y}{\ensuremath{\mathbf{y}}}
\title{Local Pan-Privacy for Federated Analytics}
\author{Anonymous Submission\\}
\author{
Vitaly Feldman\footnote{Apple.} 
\and 
Audra McMillan\footnotemark[1]
\and 
Guy N. Rothblum\footnotemark[1] 
\and 
Kunal Talwar\footnotemark[1]
}
\newif\ificml
\begin{document}
\maketitle
\begin{abstract}
Pan-privacy was proposed by~\cite{DworkNPRY10} as an approach to designing a private analytics system that retains its privacy properties in the face of intrusions that expose the system's internal state. Motivated by federated telemetry applications, we study {\em local pan-privacy}, where privacy should be retained under repeated unannounced intrusions {\em on the local state}. We consider the problem of monitoring the count of an event in a federated system, where event occurrences on a local device should be hidden even from an intruder on that device. We show that under reasonable constraints, the goal of providing information-theoretic differential privacy under intrusion is incompatible with collecting telemetry information. We then show that this problem can be solved in a scalable way using standard cryptographic primitives.

\end{abstract}

\section{Introduction}

Private federated telemetry systems allow 
for collection of aggregate statistics from a population, while ensuring a strong privacy guarantee for individuals. 
For example, private federated learning and statistics have been used to collect webpage and search engine popularities from Chrome browsers~\citep{ErlingssonPK14}, learn popular emojis~\citep{Apple2017}, collect Covid epedimiological metrics~\citep{ENPA:2021}, collect browser performance metrics~\citep{firefox}, collect operating system telemetry~\citep{DingKY17}, and train keyboard models~\citep{XuZACKMRZ23, ZhangRXZZK23}.

These systems typically rely on the client device storing information about usage on device, and periodically, at appropriate times, taking part in a protocol that computes an aggregate. These protocols use encryption to protect this data from an eavesdropper. In some cases, additional cryptographic protocols are used to protect the individual contributions from the server computing the aggregate~\citep{Bonawitz17, Corrigan-GibbsB17}. Finally, the aggregate itself protects individual contributions by noise addition to provide a differential privacy guarantee.

In some applications, one may want to additionally protect against an attacker that has access to the device. For example, a computer in a public library may be accessed by one user, and later by another. In such a case, we would want to ensure that the second user cannot learn about the activity of the first user by inspecting the internal state of the device. In this work, we initiate the study of estimating population statistics while ensuring privacy against an on-device intruder. We call this model {\em local pan-privacy}, as it aims to  protect against an intrusion at the local device, much as pan-privacy protects against an intrusion at the server.

We study locally pan-private algorithms some fundamental statistical tasks in a simple streaming setting. There are $n$ devices in total. For each device, at each time step, an event of interest, such as an application crash while using a feature, may or may not occur. Thus, each device's input is a sequence of $T$ bits. The device can communicate with the server at the end of the $T$ time steps, and we require that the protocol retains its privacy properties in the face of multiple intrusions on device (c.f. \cref{remark:definition}). The first statistic we study is {\em \countnonzero}, which counts the number of devices on which the event occurred in at least one of the $T$ time steps. In the standard local privacy model, this task can be accomplished by each device sending a randomized response of a bit corresponding to whether on not the event occurred at any of the $T$ steps. Such a protocol can also lead to near-optimal central privacy guarantees via shuffling or aggregation~\cite{FeldmanMT:2020}, when the responses are aggregated by a trusted server, or by a secure aggregation system. In this work, we focus on the setting where we either have a trusted server, or where the secure aggregation system uses a two-server architecture as in PRIO~\cite{Corrigan-GibbsB17}.

We also study two additional statistics of the event count distribution: the {\em mean} number of occurrences per device, and a {\em histogram} of the number of occurrences, appropriately bucketed. As with the \countnonzero, these tasks also admit simple and commonly-used algorithms in the local privacy model, based on the Laplace mechanism\citep{Dwork:2006} and on {\sc Rappor}\citep{ErlingssonPK14} respectively. We remark that these simple tasks underpin a large number of private federated statistics, and can enable a much richer set of data science tasks. For example, ~\citet{zhu2020federated, ChadhaCDFHJMT24} use histograms over small known domains to discover heavy hitters over large domains.

We show that local pan-privacy is severely limited if we insist on providing (information-theoretic) differential privacy in the face of intrusions. Our lower bound shows that for the \countnonzero\ task, the error of any algorithm must be $\sqrt{T}$ times larger than that needed for local differential privacy alone. This lower bound holds even when the event occurs at most once on each device.

\begin{theorem}[Informal version of~\cref{thm:main_lb}]
    Any locally pan-private algorithm (for $\eps=1$) for \countnonzero\ on $n$ devices, for large enough $T$, must incur additive error $\Omega(\sqrt{nT})$, even though a local DP algorithm can estimate \countnonzero\ with additive error $\Omega(\sqrt{n})$.
\end{theorem}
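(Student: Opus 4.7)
The plan is a Le~Cam two-point argument on the population. For a single device, let $P_0$ denote the all-zeros stream, let $P_t$ denote the stream with a single $1$ at time $t$, and set $P_B := \tfrac{1}{T}\sum_{t=1}^T P_t$. Consider the single-device mixture $P_\delta := (1-\delta) P_0 + \delta P_B$: the device sees all zeros with probability $1-\delta$, and otherwise sees a single $1$ at a uniformly random time, so it contributes $\mathrm{Ber}(\delta)$ to \countnonzero. The two population-level hypotheses $P_0^{\otimes n}$ and $P_\delta^{\otimes n}$ have \countnonzero\ equal to $0$ and concentrated around $n\delta$ respectively. Choosing $\delta$ so that the induced output distributions have bounded chi-squared will force the error of any estimator to be $\Omega(n\delta)$.

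The technical heart is to bound, on a single device, the chi-squared divergence on the entire transcript $(S_1,\ldots,S_T,M)$ that a maximally intrusive adversary sees. Pan-privacy against intrusions at every time step means this full transcript is $\eps$-DP with respect to any single-bit change, so $\chi^2(P_t^{\mathrm{view}}, P_0^{\mathrm{view}}) = O(\eps^2)$ for each $t$. The key observation is that the trajectory-level likelihood ratio $L_t := dP_t/dP_0$ collapses to the single transition factor $\rho_t(S_{t-1},S_t) := \Pr[S_t\mid S_{t-1}, b_t{=}1]/\Pr[S_t\mid S_{t-1}, b_t{=}0]$, because all other per-step transition ratios are $1$ (the two streams agree on those bits). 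Combining the Markov structure of the trajectory under $P_0$ with the identity $\mathbb{E}_{P_0}[\rho_t \mid S_{t-1}]=1$, the tower property yields $\mathbb{E}_{P_0}[(L_s-1)(L_t-1)]=0$ for $s\neq t$; the per-bit likelihood ratios are thus orthogonal in $L^2(P_0)$. Consequently
\[
\chi^2(P_B, P_0) \;=\; \mathbb{E}_{P_0}\!\left[\bigl(\tfrac{1}{T}\textstyle\sum_t (L_t-1)\bigr)^2\right] \;=\; \frac{1}{T^2}\sum_{t=1}^T \chi^2(P_t,P_0) \;=\; O(\eps^2/T).
\]

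Since $dP_\delta/dP_0-1 = \delta\bigl(dP_B/dP_0 -1\bigr)$, we get $\chi^2(P_\delta, P_0) \leq O(\delta^2 \eps^2 / T)$, and product tensorization gives $\chi^2(P_\delta^{\otimes n}, P_0^{\otimes n}) \leq (1+O(\delta^2\eps^2/T))^n - 1$. Choosing $\delta = c\sqrt{T}/(\eps\sqrt n)$ for a sufficiently small constant $c$ makes this chi-squared a constant, hence bounds the total variation away from $1$. The expected gap in \countnonzero\ is $n\delta = c\sqrt{nT}/\eps$, and a standard Hammersley--Chapman--Robbins (applied to $\hat{C}$ under $P_0^{\otimes n}$, using $\mathbb{E}[\hat C^2]_{P_0^{\otimes n}} \leq E^2$ since \countnonzero\ vanishes there) forces $E = \Omega(n\delta) = \Omega(\sqrt{nT}/\eps)$; setting $\eps=1$ gives the claimed bound in the nontrivial regime $T\lesssim n$. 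The main obstacle is the orthogonality step: without pan-privacy at every intermediate time only the final message need be DP, and the per-bit likelihood ratios can coherently reinforce so that $\chi^2(P_B,P_0)$ does not gain the crucial $1/T$ factor. It is therefore essential to argue about the full state trajectory rather than the released message alone, and to exploit the Markov structure that localizes each bit's effect on the view to a single transition.
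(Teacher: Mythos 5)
Your proposal is correct, and it reaches the paper's quantitative bound by a genuinely different technical route. The paper (Lemma~\ref{postprocessing}, Lemma~\ref{lem:tv_small}) first shows, via the Kairouz et al.\ extremal characterization of binary local randomizers, that the entire state trajectory is a post-processing of $T$ independent randomized responses of the input bits; it then uses exchangeability of the uniform-position-of-the-one distribution to reduce the output to a post-processing of the \emph{count} of flipped RR bits, and finally bounds the total variation between $\Bin(T,\tfrac{1}{e^\eps+1})$ and $\Bin(T-1,\tfrac{1}{e^\eps+1})+\Ber(\tfrac{e^\eps}{e^\eps+1})$ directly (via De Moivre's mean absolute deviation formula). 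You instead work purely divergence-theoretically: you observe that the trajectory likelihood ratio $L_t = dP_t/dP_0$ collapses to a single transition factor $\rho_t(S_{t-1},S_t)$, use pan-privacy at each step to get $\chi^2(P_t,P_0)=O((e^\eps-1)^2/e^\eps)$, and then exploit the Markov structure plus $\E_{P_0}[\rho_t\mid S_{t-1}]=1$ to show $\{L_t-1\}_t$ are mutually orthogonal in $L^2(P_0)$, yielding $\chi^2(P_B,P_0)=\tfrac{1}{T^2}\sum_t\chi^2(P_t,P_0)=O((e^\eps-1)^2/(Te^\eps))$, which squares to the paper's TV bound. Both arguments exploit the same two structural facts --- that pan-privacy localizes the DP constraint to each transition (your ``collapse'' observation corresponds to the paper's Lemma~\ref{postprocessing}), and that averaging the position of the single `1' over $T$ slots washes out its effect --- but the mechanism for extracting the $1/\sqrt{T}$ factor differs: the paper uses exchangeability to reduce to a binomial counting problem, whereas you use Markovian orthogonality and avoid any explicit characterization of the randomizer. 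Your route is arguably cleaner in that it skips the RR-extremality step entirely; the paper's route gives a slightly tighter constant via the exact De Moivre identity. One presentational nit: the concluding Hammersley--Chapman--Robbins step is phrased loosely (the inequality $\E_{P_0^{\otimes n}}[\hat C^2]\le E^2$ does not follow from $\E[|\hat C|]\le E$ alone); the standard Le~Cam thresholding argument, which you effectively invoke, closes this cleanly without needing second-moment control on the estimator.
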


We then show that under standard cryptographic assumptions, local pan-privacy can be ensured without this overhead. We present algorithms for all of the aforementioned problems, for both the single- and the two-server models, showing that local pan-privacy comes at {\em no} additional cost in the privacy-utility trade-off. Our protocols need a public-key encryption scheme, with a few additional properties that are satisfied by commonly-used encryption schemes. We note that public-key encryption is already used to protect the communication from a network intrusion, so local pan-privacy does not increase the complexity of the required  assumptions. While we can define local pan-privacy broadly as a computational differential privacy guarantee against a local intruder, our protocols actually provide a stronger semantic security guarantee.

\begin{theorem} [Informal version of ~\cref{thm:ub_countnonzero}]
\label{thm:ub_informal}
   Suppose that we have a public-key encryption scheme that is rerandomizable. Then there is a streaming algorithm for \countnonzero\ in the single-server and in the two-server model with the following properties.
    \begin{itemize}
        \item The on-device algorithm satisfies computational local pan-privacy; the on-device sequence of states on any pair of input streams are computationally indistinguishable.
        \item The on-device state consists of $O(1)$ ciphertexts, and the device sends one message consisting of $O(1)$ ciphertexts.
        \item In the local and aggregator model of differential privacy, the algorithm achieves privacy-utility trade-offs that are within constant factors of algorithms without the local pan-privacy constraint.
    \end{itemize}
\end{theorem}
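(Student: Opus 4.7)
The plan is to give an explicit construction that uses the rerandomizable PKE in the simplest possible way, then verify the three listed properties in turn. For \countnonzero, each device maintains a single ciphertext $c$ as its on-device state, and $c$ invariantly encrypts the running OR of the observed events under the server's public key $pk$ (or shares thereof, for the two-server model). The update at time step $t$ is: if the incoming bit is $0$, replace $c$ by $\text{Rerand}(pk, c)$; if the incoming bit is $1$, overwrite $c$ with a freshly sampled $\text{Enc}_{pk}(1)$. This update rule preserves the invariant by the correctness of $\text{Rerand}$ and of $\text{Enc}$, and it requires only $O(1)$ ciphertexts of state. At the end of the stream the device transmits its final ciphertext to the server(s), which then run the standard aggregator- or local-DP protocol on the plaintext bits obtained by decryption.

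First I would prove pan-privacy. Fix two input streams $\x, \y \in \{0,1\}^T$, and compare the induced distributions on the state sequence $(c_1, \ldots, c_T)$ by a hybrid argument over the $T$ time steps. In each hybrid step, the current $c_t$ is either a freshly sampled encryption (when the event bit is $1$) or the output of $\text{Rerand}$ (when the event bit is $0$); by the rerandomization property combined with semantic security, a single such $c_t$ is computationally indistinguishable from a fresh encryption of any fixed bit. Composing across the $T$ steps gives computational indistinguishability of the full state sequences under $\x$ and $\y$, which is exactly computational local pan-privacy; the same argument handles repeated intrusions because every intermediate state is already indistinguishable from fresh.

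Utility follows because the decryption of $c_T$ equals the OR bit exactly (by induction on $t$), so the server's view after decryption is distributionally identical to its view in the corresponding non-pan-private baseline. Hence plugging in the standard noise-addition protocol (Laplace in the aggregator model, randomized response in the local model applied at transmission time using a pre-sampled coin) achieves the same privacy-utility tradeoff up to constants. The main obstacle I expect is handling the two-server case cleanly: the transmitted message must not reveal the OR bit to either server individually, which requires a secret-sharing of the bit at transmission time, yet the device never knows the plaintext OR on-device (storing it would break pan-privacy). One resolution is to pre-sample the share randomness at time $0$ and maintain two rerandomized ciphertexts, one under each server's public key, in parallel throughout the stream, carefully arranging the invariants so that no plaintext information about the running OR is ever written into the device state; verifying this via an extended hybrid that also covers the transmission-time operations, and ensuring that the final shares decrypt to a valid additive sharing, is the delicate part.
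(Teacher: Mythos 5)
Your construction and the hybrid-over-time-steps argument are essentially the paper's approach, but there is a real gap in the pan-privacy step that stems from how the paper defines rerandomizable encryption. The paper's Definition~\ref{def:rerandomizable} deliberately does \emph{not} require a rerandomized ciphertext to be indistinguishable from a freshly generated one; it only guarantees that $\rerandomize^t(\enc(m))$ is indistinguishable from $\rerandomize^t(\enc(m'))$ \emph{for the same number of rerandomizations} $t$ (and, as used in the hybrid, that one further rerandomization of such a ciphertext is indistinguishable across plaintexts). Your update rule, however, overwrites the state with a fresh $\enc(1)$ whenever $x_t = 1$. As a result, after $t$ steps your state is $\rerandomize^{j}(\enc(1))$ for some $j < t$ that depends on the input stream, while on the all-zeros stream it is $\rerandomize^{t}(\enc(0))$. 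Your claim that ``a single such $c_t$ is computationally indistinguishable from a fresh encryption of any fixed bit'' is exactly the stronger fresh-vs-rerandomized indistinguishability the paper avoids assuming, so the hybrid does not go through under the paper's definition. The paper's Algorithm~\ref{localpanprivcounting} sidesteps this by setting $c \leftarrow \rerandomize^t(\enc(1))$ on a $1$-update, so that regardless of input the state after $t$ steps is always a $t$-fold rerandomization of some fresh encryption; this is the property the definition actually supports, and the fix to your argument is to add this depth-matching.

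On the two-server case: you correctly flag it as the delicate part. The paper's resolution is to generate a \emph{fresh} secret-sharing of the current bit on every $1$-update (again with matched rerandomization depth on each encrypted share), rather than pre-sampling share randomness at time $0$; the latter would require updating shares in place, which needs homomorphism you have not assumed. The paper also attaches (and rerandomizes) validity proofs to defend against malicious clients, which is additional machinery your sketch does not address. Finally, a minor point: for \countnonzero\ the paper uses randomized response (implemented on ciphertexts via Observation~\ref{obs:rr_implementation}) in both the local and aggregator models, not Laplace, though this does not affect the claimed constant-factor optimality.
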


In a rerandomizable encryption scheme, a ciphertext can be ``rerandomized'' (using only the public key) to create a new ciphertext encrypting the same plaintext, which is indistinguishable from a fresh encryption of a different plaintext, see \cref{def:rerandomizable}. 
A similar result to \cref{thm:ub_informal} holds for building approximate histograms, as well as for estimating the mean number of events. The latter requires slightly stronger assumptions on the encryption scheme (namely, the scheme needs to be additively homomorphic).

At a high level, this is achieved by maintaining a state on device that contains information about the stream so far, but in encrypted form. Since the local device does not have the private key, this ensures that the on-device state contains no useful information for an adversary that does not collude with the server. The challenge then is to maintain the state under updates to the stream, and we show that for our tasks of interest, this is feasible. We remark that a public-key fully homomorphic encryption (FHE) scheme can be used to achieve computational local pan-privacy, as any algorithm for the on-device computation can be made locally pan-private by keeping the state encrypted at all times and operating on the encrypted state. However, this is overkill: FHE schemes incur a large space and time overhead, and in this work we strive to rely on more minimal and more efficient cryptographic primitives.

 We provide privacy at the level of the whole device, and not just at the event level. Thus a user may use the shared device multiple times during our collection period, and our algorithms will protect all of their interactions. Our model allows for continuous intrusion: the adversary can see the state of the device before and after each update. In our example of a shared device, this means that the attacker can see the memory contents of the device before an after a user is using it (but not while they are using it). Thus from our point of view, the updates to the state are atomic.
Finally, we show that the existence of a public-key encryption scheme is necessary for the existence of an accurate locally pan-private algorithm for \countnonzero.
\begin{theorem}[Informal version of ~\cref{thm:pkeneeded}]
Suppose that we have a locally pan-private algorithm for \countnonzero\ that has additive error less than $n/4$ with high probability. Then we can build a public key encryption scheme\end{theorem}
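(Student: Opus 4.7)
The plan is to turn the pan-private algorithm $\agg$ into a public-key bit-encryption scheme, using the on-device states as ciphertexts and the server's aggregation (with a suitable secret) as the decryption procedure.

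First, I would fix two distributions over per-device input streams: $D_0$ that assigns the all-zero stream to every device, and $D_1$ that assigns a single event (say at time $1$) to every device. The true value of \countnonzero\ is then $0$ under $D_0$ and $n$ under $D_1$, so the server's estimate --- whose additive error is strictly less than $n/4$ with high probability --- falls below $n/2$ in the former case and above $n/2$ in the latter, and thresholding at $n/2$ recovers the input distribution with high probability.

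Next, I would define the PKE as follows. $\mathrm{KeyGen}(1^\lambda)$ runs the setup of $\agg$, publishing the public parameters as $pk$ and retaining the server's private information (any private keys it holds together with the randomness used in its aggregation) as $sk$. $\mathrm{Enc}(pk,b)$ simulates all $n$ device parts of $\agg$ on independent streams drawn from $D_b$ and outputs the ciphertext $c = (S^{(1)}_T,\ldots,S^{(n)}_T)$ consisting of the $n$ final on-device states. $\mathrm{Dec}(sk,c)$ uses $sk$ together with the states to produce the outgoing messages, runs the server's aggregation to obtain an estimate $\hat C$, and outputs $1$ if $\hat C > n/2$ and $0$ otherwise. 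Correctness is immediate from the error bound on $\agg$; semantic security would follow by a standard hybrid over the $n$ independently-simulated devices, since a PPT distinguisher for $c$ across $b=0$ and $b=1$ would yield a PPT distinguisher for the on-device states on two single-device input streams, directly contradicting local pan-privacy.

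The hard part will be to pin down what $sk$ actually is for an arbitrary locally pan-private $\agg$ that is not written in a way that makes a server-side secret explicit. The key sub-claim to establish is that any such $\agg$ must use some randomness or key material in its outgoing-message and aggregation steps that is hidden from the on-device intruder --- for if the server's computation were public-coin and reducible to a function of the on-device states alone, then the intruder could apply it herself to distinguish $D_0$ from $D_1$ with advantage much larger than any constant, contradicting local pan-privacy in light of the strong utility guarantee. Isolating this hidden randomness/key material and packaging it into $sk$ is where the cryptographic substance of the reduction lives; I expect this to require a careful hybrid over the $T$ time steps together with a separation of the coins used ``inside'' the device (which are part of the pan-private state) from those used at or after the message-send step (which become the secret-key material of the resulting PKE).
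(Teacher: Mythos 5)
Your proposal takes essentially the same route as the paper: run $\initialize$ $n$ times, treat the client-side initial states as the public key and the server-side state created by $\initialize$ as the private key, encrypt a bit $b$ by simulating the clients on streams whose $\one(\cdot)$ value equals $b$, and decrypt by running $\Out$ and $\estimate$ and thresholding at $n/2$; correctness is the utility guarantee, and security is a hybrid over the $n$ simulated devices reducing to local pan-privacy. Two small points of divergence. First, you output the \emph{final} states $(S^{(1)}_T,\ldots,S^{(n)}_T)$ as the ciphertext, while the paper outputs $(1,\{s^{(i)}_1\}_{i=1}^n)$ --- the states after a single update step, tagged with the time index. For the informal statement (``we can build a PKE'') both choices work, but the paper's choice is what lets it additionally conclude the scheme is \emph{rerandomizable} (advance the state by one zero-input step and increment $t$), which is part of the formal \cref{thm:pkeneeded} and of \cref{thm:pkeneededv2}. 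Second, the worry in your final paragraph about ``pinning down $sk$'' and needing a ``careful hybrid over the $T$ time steps'' is over-cautious: the paper's model already makes the server-side secret explicit, since $\initialize$ is permitted to create per-client server state and $\estimate$ is implicitly parameterized by it, so $sk$ is simply the $n$ server states with no further isolation argument required. Your observation that such hidden state must exist --- otherwise the intruder could run $\Out$ and $\estimate$ itself and break local pan-privacy via the utility guarantee --- is sound and is the right intuition for why the resulting scheme has a genuine secret key, but it does not need to be made into a separate lemma.
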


\subsection{Related Work}

Pan-privacy~\citep{DworkNPRY10} was proposed as an approach to designing data analysis algorithms that do not maintain a disclosive internal state. In particular, they maintain privacy even under intrusions. This notion was studied in several subsequent works, e.g. \citep{MirMNW11}.
The goal of maintaining privacy in the presence of potential intrusions on a user's device is common to several settings. Examples include web browsing, where several popular browsers offer a mode where browsing history is not stored on device and chat apps where messages are ephemeral. In many situations, it may be undesirable to keep any information from events in such settings. In others, statistical information about events such as application crashes may be useful to improve the user experience. The notion of history independence~\citep{Micciancio97,NaorT01} shares similar goals of ensuring that the memory representation of a data structure does not leak information about the history of interactions with a data structure. Oblivious RAM algorithms~\citep{GoldreichO96} share a similar goal of ensuring that the memory access pattern of an algorithm does not leak information about the inputs it is being run on. Forward Secrecy~\citep{Gunther90} addresses a similar goal of protecting old communications from future intrusions.

\subsection*{Organization:} We present some preliminaries and definitions in \cref{sec:prelims}. \cref{sec:lb} presents our lower bound for the information-theoretic privacy case. We present algorithms in the single- and two-server settings in \cref{sec:ub-single} and \cref{sec:ub-two-server} respectively. \cref{sec:pkeneeded} shows that public-key cryptography is needed, and we conclude with some open problems in \cref{sec:conclusions}.

\section{Definitions and Preliminaries}
\label{sec:prelims}
We first recall the definition of differential privacy.
\begin{definition}[ $(\eps,\delta)$-Indistinguishability] We say that two random variables $Y$ and $Y'$ on a finite set $R$ are $(\eps,\delta)$ indistinguishable if for for every $S \subseteq R$,
  \begin{align*}
    \Pr[Y \in S] &\leq e^{\eps} \Pr[Y' \in S] + \delta,\\
    \mbox{and}\,\,\,\,     \Pr[Y' \in S] &\leq e^{\eps} \Pr[Y \in S] + \delta.
  \end{align*}
  When two r.v.'s are $(\eps,0)$-indistinguishabile, we will often call them $\eps$-indistinguishable.
\end{definition}

\begin{definition}[Differential Privacy]
  A mechanism $M: D^n \rightarrow R$ is $(\eps,\delta)$-differentially private if for any pair of neighboring datasets $\mathbf{x},\mathbf{x'}$, the random variables $M(\mathbf{x})$ and $M(\mathbf{x'})$ are $(\eps, \delta)$-indistinguishable.
\end{definition}
Here neighboring datasets typically means datasets that differ in the input of one user.

We consider a data stream $\x = x_1,\ldots, x_T$, where each $x_i \in D$. In this work we will be concerned with the case that $D = \{0,1\}$.  A streaming algorithm is defined by a set of functions. The function $\initialize$ (usually left implicit) sets up the state on device, including the state $s_0$, and potentially some state on the server to allow coordination between the two. We have a sequence of functions, where $\textbf{State}_t : D\times \internalstatespace \to \internalstatespace$ is a (possibly randomized) function that takes the input at time $t$ and the current state $s_{t-1}$, and maps to the new state. Given an input stream $x_1,\ldots,x_T\in D$ and an initial state $s_0$ (we will sometimes omit $s_0$ as an argument for brevity), let $s_t=\textbf{State}_t(x_t;s_{t-1})$ and $\textbf{State}(x_1,\ldots,x_T)=(s_0,s_1,\ldots,s_T)$.

An estimation algorithm is defined by additional functions $\Out$ and $\agg$, where $\Out$ maps $s_T$ to an output space $\OS$, and $\estimate$ takes a vector of $n$ elements from $\OS$ and computes an estimate of the desired statistic.

\begin{definition} [Local pan-privacy]
  We say a streaming algorithm defined by a set of functions $\textbf{State}_t$ is $\eps$-locally pan-private if for any pair of streams $\x = x_1,\ldots,x_T$ and $\x'=x'_1,\ldots,x'_T$, the state vectors $\textbf{State}(\x)$ and $\textbf{State}(\x')$ are $\eps$-indistinguishable.
\end{definition}

\begin{remark}
    \label{remark:definition}
    Several remarks are in order.  (a) The definition considers neighboring datasets to differ in the full stream at one device. Thus it provides user-level privacy. (b) We require that privacy holds against an adversary that can monitor the internal state on the device after each event, and all communication out of the device. In other words, we guard against multiple intrusions, or continuous intrusion. The problem becomes easier if we only had to guard against a single intrusion. However, for the examples that motivate this work (shared device such a public computer), restricting the adversary to a single intrusion is unnatural. (c)  A natural generalization of our model can allow the device to send multiple messages, while making sure to account for the potential information leakage from the event of sending or not-sending a message (that may be observable by an adversary). This generalization does not make the model stronger, as an algorithm that is locally pan-private in this model can store the messages it would have sent, and send them at step $T$.\footnote{If we were concerned about the size of $\internalstatespace$, the many-messages version of the model may be more powerful.} (d) In some settings, one may only be interested in a smaller set of {\em admissible} streams $\x$, and one can naturally adapt the definition above by requiring the streams $\mathbf{x}$ and $\mathbf{x'}$ to be admissible. We have aimed to keep the definition above simple, at the expense of generality. One can replace the quantifier over $\x, \x'$ above to require them to lie in some set $\mathcal{X}$ of admissible streams.
\end{remark}
The following definitions capture the standard notions of privacy for such algorithms.
\begin{definition}
  We say an estimation algorithm is $(\eps,\delta)$-locally differentially private if for any pair of streams $\x=x_1,\ldots,x_T$ and $\x'=x'_1,\ldots,x'_T$, the distributions $\Out(\textbf{State}(\x))$ and $\Out(\textbf{State}(\x'))$ are $(\eps,\delta)$-indistinguishable.
\end{definition}
\begin{definition}
  We say an estimation algorithm is $(\eps,\delta)$-aggregator differentially private if for any pair of streams $\x$ and $\x'$, and any set of $(n-1)$ streams $\{\y^{(j)}\}_{j=1}^{n-1}$ the distributions defined by $\sum_{i=1}^{n-1} \Out(\textbf{State}(\y^{(i)})) + \Out(\textbf{State}(\x))$ and $\sum_{i=1}^{n-1} \Out(\textbf{State}(\y^{(i)})) + \Out(\textbf{State}(\x'))$
 are $(\eps,\delta)$-indistinguishable.
\end{definition}


We next define computational indistinguishability. A “security” parameter $\lambda$ will control various quantities in these definitions. The adversary will be computationally bounded to be polynomial in $\lambda$, and we say a function in $\lambda$ is {\em negligible} if it approaches zero faster than the inverse of any polynomial in $\lambda$.
\begin{definition}
Let $\{D_{\lambda}\}_{\lambda}$ and $\{D'_{\lambda}\}_{\lambda}$ be two ensembles of distribution. We say that they are $f(\lambda)$-computationally indistinguishable if for any non-uniform probabilistic polynomial time algorithm $A$, 
\begin{align*}
    |\Pr_{z \sim D_\lambda}[A(z) = 1] - \Pr_{z \sim D'_\lambda}[A(z) = 1]| \leq f(\lambda).
    \end{align*}
When two ensembles are $f(\lambda)$-computationally indistinguishable for negligible function $f$, we say that they are computationally indistinguishable.
\end{definition}

We use a public-key encryption scheme. 

\begin{definition}
\label{def:pke}
  A public key encryption scheme is defined by the following set of probabilistic polynomial time (p.p.t.) algorithms.
  \begin{description}
  \item[Key Generation] $\KeyGen(\cdot)$ takes a security parameter in unary $1^\lambda$ and outputs a pair of keys $(k_{priv}, k_{pub})$, each in $\{0,1\}^*$.
  \item[Encryption] $\enc(m, k_{pub})$ takes a message $m$ and a public key $k_{pub}$ and outputs an encryption $c \in \{0,1\}^*$.
    \item[Decryption] $\dec(c, k_{priv})$ takes a ciphertext $c$ and a private key $k_{priv}$ and outputs a message $m$.
    \end{description}

 These functions have the following properties:
 \begin{description}
 \item[Correctness] Suppose that $(k_{priv}, k_{pub}) \leftarrow \KeyGen(1^\lambda)$ for some $\lambda$. Then for any $m \in \{0,1\}^k$, it holds that $\dec(\enc(m, k_{pub}), k_{priv})$ equals $m$.
   \item[Semantic Security] For any $m, m'$, the encryptions $\enc(m, k_{pub})$ and $\enc(m', k_{pub})$ are computationally indistinguishable.
 \end{description}
\end{definition}

We rely on encryption schemes that allow re-encryption of an encrypted message.\footnote{This is slightly weaker than the usual notion of rerandomizable encryption, as we allow $c$ and $\rerandomize(c, k_{pub})$ to be distinguishable. We show that this notion is necessary and sufficient for our purposes.} Informally, a rerandomizable encryption allows for an encrypted message to be ``rerandomized'', i.e. to be replaced by a new encryption of the same message, that is indistinguishable from a new encryption. Here and in the rest of the paper $\rerandomize^t(\cdot)$ denotes the $t$-fold composition $\rerandomize(\rerandomize(\ldots(\rerandomize(\cdot))))$.
\begin{definition}\label{def:rerandomizable}
    A public key encryption scheme is {\em rerandomizable} if there is a function $\rerandomize(c, k_{pub})$ with the following properties. For any $m,m'$, and any $t\geq 0$, let $c=\rerandomize^{t}(\enc(m, k_{pub}))$ and $c'=\rerandomize^{t}(\enc(m', k_{pub}))$, and let $\bar{c} =\rerandomize(c, k_{pub})$ and $\bar{c'} = \rerandomize(c', k_{pub})$. Then $\dec(c, k_{priv}) = \dec(\bar{c}, k_{priv})$. Furthermore, the tuple $(c, \bar{c})$ is computationally indistinguishable from $(c, \bar{c'})$. 
\end{definition}

Note that this definition implies that for any $t$ and any $m,m'$, the distributions $\rerandomize^t(\enc(m))$ and $\rerandomize^{t}(\enc(m'))$ are computationally indistinguishable. We remark that we do not require a rerandomized ciphertext to be indistinguishable from a freshly generated one: it need only be indistinguishable from a rerandomized encryption of any different plaintext (for the same number of rerandomizations $t$).

\citet{MironovPRV09} defined and related different notions of computational differential privacy, and those notions can be extended to local pan-privacy. We state a version of this definition next.
\begin{definition} [Computational $(\eps,\delta)$-indistinguishability]
\label{def:comp_dp}
Let $\{D_{\lambda}\}_{\lambda}$ and $\{D'_{\lambda}\}_{\lambda}$ be two ensembles of distribution. We say that they are computationally $(\eps,\delta)$-indistinguishable if for any non-uniform probabilistic polynomial time algorithm $A$, 
\begin{align*}
    \Pr_{z \sim D_\lambda}[A(z) = 1] &\leq e^{\eps}\cdot\Pr_{z \sim D'_\lambda}[A(z) = 1]| + \delta + \negl(\lambda).
    \end{align*} 
  When two ensembles of r.v.'s are computationally $(\eps,0)$-indistinguishabile, we will often call them computationally $\eps$-indistinguishable.
    
\end{definition}
\begin{definition}[Computational Local Pan-Privacy,{\sc Ind-Cdp} version]
  We say a streaming algorithm defined by a set of functions $\textbf{State}_t$ is computationally $\eps$-locally pan-private if for any pair of streams $\x = x_1,\ldots,x_T$ and $\x'=x'_1,\ldots,x'_T$, the state vectors $\textbf{State}(\x)$ and $\textbf{State}(\x')$ are computationally $\eps$-indistinguishable.

\end{definition}

We recall some basic mechanisms that will be useful. We refer the reader to~\citet{DworkR14} for their privacy proofs.
\begin{definition}[Randomized Response]
Let $\eps > 0$. The randomized response mechanism $\RR{\eps} : \{0,1\} \rightarrow \{0,1\}$ is an $\eps$-DP mechanism defined as:
\begin{equation*}
    \RR{\eps}(b) = \left\{ 
    \begin{array}{ll}
    b & \mbox{with probabilty } \tfrac{e^{\eps}}{1+e^{\eps}}\\
    1-b & \mbox{with probabilty } \tfrac{1}{1+e^{\eps}}\\
    \end{array}
    \right.
\end{equation*}

\end{definition}
The following observation about implementing randomized response will be useful.

\begin{observation}
    \label{obs:rr_implementation}  Let $b \in \{0,1\}$ and $\eps > 0$. Then $\RR{\eps}(b)$ can be implemented by outputting $b$ with probability $\frac{e^{\eps}-1}{e^{\eps}+1}$, and outputting a random bit otherwise.
\end{observation}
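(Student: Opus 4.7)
The plan is to verify the equivalence by direct computation of the output distribution of the proposed two-stage procedure and checking it matches the definition of $\RR{\eps}$. Let $p = \frac{e^{\eps}-1}{e^{\eps}+1}$, so the alternative implementation outputs $b$ with probability $p$ (the deterministic branch) and, with probability $1-p = \frac{2}{e^{\eps}+1}$, outputs a uniformly random bit from $\{0,1\}$.

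I then compute the total probability of outputting $b$ under this procedure. The output equals $b$ either from the deterministic branch (probability $p$) or from the random-bit branch conditioned on the coin landing on $b$ (probability $(1-p)/2$). So
\begin{align*}
\Pr[\text{output} = b] = p + \frac{1-p}{2} = \frac{p+1}{2} = \frac{1}{2}\left(\frac{e^{\eps}-1}{e^{\eps}+1} + 1\right) = \frac{e^{\eps}}{e^{\eps}+1},
\end{align*}
matching the definition. The complementary event gives $\Pr[\text{output} = 1-b] = (1-p)/2 = \frac{1}{e^{\eps}+1}$, which again matches. Since these two procedures induce the same distribution on $\{0,1\}$ for either value of $b$, they are equivalent as randomized mechanisms.

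The only step that requires care is ensuring $p \in [0,1]$, i.e., checking $e^{\eps}-1 \geq 0$ (which holds since $\eps > 0$) and $e^{\eps}-1 \leq e^{\eps}+1$ (trivially true). There is no real obstacle here; the observation is a purely algebraic restatement of $\RR{\eps}$ and the proof is a one-line computation. The value of the observation lies not in its difficulty but in the fact that it exposes a ``with probability $\frac{2}{e^{\eps}+1}$, output a uniform bit'' structure that will be convenient in later constructions, where rerandomizing an encrypted uniform bit can be done without knowledge of the underlying plaintext.
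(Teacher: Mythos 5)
Your proof is correct and is the standard direct computation; the paper itself states this as an observation without proof, and your one-line verification $p + \tfrac{1-p}{2} = \tfrac{e^{\eps}}{e^{\eps}+1}$ is exactly the intended argument. Your closing remark about why this decomposition is useful (it isolates a ``fresh uniform bit'' branch that can be implemented under encryption without knowing the plaintext) correctly identifies the role the observation plays in Algorithms~\ref{localpanprivcounting} and \ref{alg:histograms}.
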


We will use the following lemma that says that any $2$-input local randomizer is a post-processing of a randomized response.
\begin{lemma}\cite{Kairouz:2015}
\label{lem:its_2RR}
    Let $\lr:\{0,1\}\to\internalstatespace$ be an $\epsilon$-DP local randomizer. Then there exists a post-processing function $h:\{0,1\}\to\internalstatespace$ such that $\lr(0)=h(\RR{\epsilon}(0))$ and $\lr(1)=h(\RR{\epsilon}(1))$.
\end{lemma}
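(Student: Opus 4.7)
The plan is to exhibit the post-processing $h$ (necessarily randomized) by solving a small linear system. For simplicity assume $\internalstatespace$ is discrete; the general case follows by replacing pmfs with Radon-Nikodym densities against any dominating $\sigma$-finite measure (e.g.\ $(\lr(0)+\lr(1))/2$). Write $p_b(s) = \Pr[\lr(b)=s]$ for $b \in \{0,1\}$, and let $\alpha = e^\eps/(1+e^\eps)$, $\beta = 1/(1+e^\eps)$, so that $\RR{\eps}(0)$ outputs $0$ with probability $\alpha$ and $1$ with probability $\beta$, and symmetrically for $\RR{\eps}(1)$. Denoting by $q_b(s)$ the probability that $h(b)=s$, the requirements $\lr(b) = h(\RR{\eps}(b))$ translate, pointwise in $s$, to the $2\times 2$ linear system $p_0(s) = \alpha q_0(s) + \beta q_1(s)$ and $p_1(s) = \beta q_0(s) + \alpha q_1(s)$.

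Next I would invert this system (the coefficient matrix is nonsingular for any $\eps>0$, with determinant $\alpha^2-\beta^2 = (e^\eps-1)/(1+e^\eps)$) to obtain
\[
q_0(s) = \frac{e^\eps p_0(s) - p_1(s)}{e^\eps-1}, \qquad q_1(s) = \frac{e^\eps p_1(s) - p_0(s)}{e^\eps-1},
\]
and then verify that $q_0,q_1$ are bona fide probability distributions. Summation is immediate: $\sum_s q_0(s) = (e^\eps \cdot 1 - 1)/(e^\eps-1) = 1$, and similarly for $q_1$. The key point is non-negativity: $q_0(s) \geq 0$ is equivalent to $p_1(s) \leq e^\eps p_0(s)$, which is exactly the $\eps$-DP guarantee for the neighboring input pair $(0,1)$ applied to the singleton event $\{s\}$; an analogous DP inequality in the other direction gives $q_1(s) \geq 0$. (Any $s$ with $p_0(s)=0$ must also have $p_1(s)=0$ by $\eps$-DP with $\delta=0$, so such $s$ can be safely excluded from the support.)

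Finally I would define $h$ to sample from $q_b$ on input $b$ and substitute back: $h(\RR{\eps}(b))$ then has the mixture distribution $\alpha q_b + \beta q_{1-b}$, which by construction equals $p_b$, so $h(\RR{\eps}(b)) = \lr(b)$ in distribution as required. The only real obstacle in the argument is the non-negativity of the $q_b$, and the $\eps$-DP hypothesis is precisely the condition that ensures it; the measure-theoretic subtlety for uncountable $\internalstatespace$ is routine and leaves the algebra untouched.
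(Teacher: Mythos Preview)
Your argument is correct; note however that the paper does not supply its own proof of this lemma---it is stated with a citation to \cite{Kairouz:2015} and used as a black box. The construction you give (invert the $2\times 2$ mixing matrix of $\RR{\eps}$ and observe that the $\eps$-DP inequalities are exactly the non-negativity constraints on the resulting $q_0,q_1$) is the standard proof of this fact and is essentially what appears in the cited reference.
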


The following utility bounds for randomized response are standard~\citep{FeldmanMT:2020}.
\begin{theorem}
\label{thm:rr_local}
    Let $\eps_0>0$. Then for any $b_1,\ldots,b_n \in \{0,1\}$, one can post-process their randomized responses $\{y_i = \RR{\eps_0}(b_i)\}$ to derive an estimate $\hat{S}$ of $S=\sum_i b_i$ such that
    \begin{align*}
        \E[\hat{S}]=S; &\;\;\;\;\; \E[|\hat{S}-S|] \leq O\left(\sqrt{n\left(1+\tfrac{e^{\eps_0}}{(1+e^{\eps_0})^2}\right)}\right).
    \end{align*}
\end{theorem}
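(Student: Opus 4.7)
The plan is to follow the standard randomized-response template: produce an explicit unbiased linear estimator from the $y_i$'s, compute its variance exactly, and then convert the variance bound into a bound on $\E[|\hat S - S|]$ via Jensen's inequality. The bound should be uniform over $b_1,\ldots,b_n$, so the key point will be showing that the per-sample variance of $y_i$ does not depend on $b_i$.

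First I would invoke Observation~\ref{obs:rr_implementation} to represent $y_i$ as a mixture: let $c_i \sim \mathrm{Bern}(q)$ with $q = (e^{\eps_0}-1)/(e^{\eps_0}+1)$ and let $r_i$ be a uniform bit, all mutually independent across $i$. Then $y_i = c_i b_i + (1-c_i) r_i$, so $\E[y_i] = q b_i + (1-q)/2$. This immediately suggests the debiased estimator
$$
\hat{S} \;=\; \sum_{i=1}^n \frac{y_i - (1-q)/2}{q},
$$
and linearity of expectation gives $\E[\hat S] = S$, handling the first claim.

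Next I would compute the variance. Since each $y_i \in \{0,1\}$ and the $y_i$ are independent, $\mathrm{Var}(y_i) = \E[y_i](1-\E[y_i])$. Checking the two cases $b_i=0$ and $b_i=1$ separately, one finds $\E[y_i]$ equals either $(1-q)/2$ or $(1+q)/2$, and in both cases $\mathrm{Var}(y_i) = (1-q^2)/4$. Hence
$$
\mathrm{Var}(\hat{S}) \;=\; n\cdot \frac{1-q^2}{4q^2}.
$$
Substituting $1-q^2 = 4e^{\eps_0}/(e^{\eps_0}+1)^2$ and $q^2 = (e^{\eps_0}-1)^2/(e^{\eps_0}+1)^2$ rewrites this in the form $n\cdot e^{\eps_0}/(e^{\eps_0}-1)^2$, which (up to an absorbed additive constant from the $1+\cdot$ inside the square root) matches the stated rate.

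Finally, Jensen's inequality applied to the concave function $\sqrt{\,\cdot\,}$ gives
$$
\E[|\hat S - S|] \;\le\; \sqrt{\E[(\hat S - S)^2]} \;=\; \sqrt{\mathrm{Var}(\hat S)},
$$
and plugging in the variance computed above yields the claimed $O(\cdot)$ bound. There is no real obstacle here: the only subtlety is verifying that $\mathrm{Var}(y_i)$ is independent of $b_i$, which is what makes the bound hold worst-case over the input rather than on average. The rest is routine algebraic bookkeeping to land the expression inside the square root in the stated form.
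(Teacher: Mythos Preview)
The paper does not actually prove this theorem: it introduces it with ``The following utility bounds for randomized response are standard~\citep{FeldmanMT:2020}'' and gives no argument. Your debias-then-variance-then-Jensen route is precisely the standard argument and is correct.

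One remark on the final algebra. Your variance computation yields $\mathrm{Var}(\hat S)=n\,e^{\eps_0}/(e^{\eps_0}-1)^2$, which is right, but this does \emph{not} match the displayed bound $n\bigl(1+e^{\eps_0}/(1+e^{\eps_0})^2\bigr)$ up to constants: as $\eps_0\to 0$ the former blows up while the latter stays bounded. The stated bound appears to be a typo; elsewhere in the paper (e.g.\ the discussion at the top of \cref{sec:lb} and \cref{thm:main_lb}) the same quantity is written with $(e^{\eps}-1)^2$ in the denominator, which is exactly what your computation produces. So your derivation is correct for the intended statement; just be explicit that you are proving the $(e^{\eps_0}-1)^2$ version rather than hand-waving the mismatch away.
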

\begin{theorem}
\label{thm:rr_agg}
    Let $(\eps,\delta)\in (0,1)$. Then there is an $\eps_0$ such that for any $b_1,\ldots,b_n \in \{0,1\}$, the randomized responses $\{y_i = \RR{\eps_0}(b_i)\}$ are $\eps$-DP in the aggregator model. Further their sum can be post-processed to derive an estimate $\hat{S}$ of $S=\sum_i b_i$ such that
    \begin{align*}
        \E[\hat{S}]=S; &\;\;\;\;\; \E[|\hat{S}-S|] \leq O(\sqrt{\log\tfrac 1 \delta}/\eps).
    \end{align*}
\end{theorem}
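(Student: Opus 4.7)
The plan is to invoke \cref{obs:rr_implementation} to realize each $y_i = \RR{\eps_0}(b_i)$ as follows: with probability $p := (e^{\eps_0}-1)/(e^{\eps_0}+1)$ output $y_i = b_i$, and otherwise output an independent fair coin. Let $Z_i \in \{0,1\}$ indicate the ``randomized'' branch, and let $K := |\{i : Z_i = 1\}|$, so $K \sim \mathrm{Bin}(n, 1-p)$; conditioned on $Z$, the aggregator's view $Y := \sum_i y_i$ equals $\bigl(\sum_{i : Z_i = 0} b_i\bigr) + B$ where $B \sim \mathrm{Bin}(K, 1/2)$ is independent of the $b_i$. I would calibrate $\eps_0$ so that $n(1-p) = 2n/(e^{\eps_0}+1) = c \log(1/\delta)/\eps^2$ for a sufficiently large constant $c$ (assuming $n$ is at least this value; otherwise $\sqrt{n} = O(\sqrt{\log(1/\delta)}/\eps)$ and the trivial estimator already meets the error bound).

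For the aggregator privacy claim I would fix a pair of neighboring inputs differing only at coordinate $j$. The two views of $Y$ are then shifts of one another by exactly $Z_j \in \{0,1\}$, so after conditioning on $Z$ the analysis reduces to the one-step indistinguishability of $\mathrm{Bin}(K, 1/2)$ and $\mathrm{Bin}(K, 1/2) + 1$. This is the standard binomial-mechanism fact: the consecutive PMF ratio $\binom{K}{k}/\binom{K}{k-1} = (K-k+1)/k$ lies in $[e^{-\eps}, e^{\eps}]$ on the Chernoff-typical event $|k - K/2| \le O(\sqrt{K \log(1/\delta)})$, whose complement has mass at most $\delta/2$ once $K = \Omega(\log(1/\delta)/\eps^2)$; a Hoeffding bound on $K$ itself absorbs the event $K < n(1-p)/2$ into another $\delta/2$ of slack. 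Combining yields $(\eps, \delta)$-indistinguishability of $Y$ on any neighboring pair, which is exactly aggregator DP for this estimator.

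For utility, $\E[y_i] = p b_i + (1-p)/2$, so $\hat{S} := (Y - n(1-p)/2)/p$ is unbiased. A direct computation (or specialization of Theorem~\ref{thm:rr_local}) gives $\mathrm{Var}(y_i) = (1-p^2)/4 = e^{\eps_0}/(e^{\eps_0}+1)^2$, whence $\mathrm{Var}(\hat{S}) = n e^{\eps_0}/(e^{\eps_0}-1)^2$. The calibration forces $e^{\eps_0} = \Theta(n \eps^2 / \log(1/\delta))$, so $\mathrm{Var}(\hat{S}) = O(\log(1/\delta)/\eps^2)$, and Jensen's inequality yields $\E[|\hat{S} - S|] \le \sqrt{\mathrm{Var}(\hat{S})} = O(\sqrt{\log(1/\delta)}/\eps)$ as claimed.

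The main obstacle is the privacy step: because the effective noise $B$ has a random scale $K$ rather than a deterministic one, the argument must combine concentration for $K$ with the pointwise PMF-ratio bound for $\mathrm{Bin}(K, 1/2)$, being careful to absorb both failure events into the $\delta$ slack. The remaining pieces---choosing $\eps_0$, verifying unbiasedness, and computing $\mathrm{Var}(\hat{S})$---are elementary once the random-scale binomial mechanism bound is in hand.
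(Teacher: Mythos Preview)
The paper does not give its own proof of this statement: it is listed among ``basic mechanisms'' and ``standard'' utility bounds for randomized response, with a citation to \citet{FeldmanMT:2020} (and privacy proofs deferred to \citet{DworkR14}). So there is no in-paper argument to compare against.

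Your proposal is essentially the standard binomial-mechanism analysis that underlies the cited result, and it is correct. One small slip: with your convention that $Z_j=1$ means ``randomized,'' the shift in $Y$ when $b_j$ flips is $1-Z_j$, not $Z_j$; this does not affect the subsequent reduction, since after conditioning on $Z$ the nontrivial case $Z_j=0$ is exactly $\mathrm{Bin}(K,1/2)$ versus $\mathrm{Bin}(K,1/2)+1$ with $K$ fixed. The remaining steps---Chernoff concentration on $K$ to guarantee $K=\Omega(\log(1/\delta)/\eps^2)$ except with probability $\delta/2$, the consecutive-PMF-ratio bound for the binomial on its typical set, the calibration $e^{\eps_0}=\Theta(n\eps^2/\log(1/\delta))$, and the variance computation via Jensen---are exactly the standard route and go through as you describe.
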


Let $\x = x_1,\ldots,x_T$ be an input stream with each $x_i \in \{0,1\}$. We denote by $\one(\x)$ the predicate $(\max_{i} x_i = 1)$. For a set of streams $\{\x^{(i)}\}_{i=1}^n$, the \countnonzero\ value is defined as $\sum_i \one(\x^{(i)})$.
\section{Lower Bound}
\label{sec:lb}

In this section, we prove a lower bound showing that information-theoretic local pan-privacy incurs a non-trivial cost on the achievable accuracy for the basic \countnonzero\ task. Without local pan-privacy, this task can be solved easily. Each device maintains $\one(\x_{1:t})$, which can be done with a single bit of state. Randomized response on this state after $T$ steps allows us to estimate \countnonzero\ with additive error $\tilde{O}(1/\eps)$ in the aggregator DP setting (\cref{thm:rr_agg}), and $\tilde{O}(\sqrt{n(1+ \frac{e^{\eps}}{(e^{\eps}-1)^2})})$ in the local DP setting (\cref{thm:rr_local}). In contrast, we show that information-theoretic local pan-privacy entails a polynomial dependence on $T$. This lower bound holds for the case when the input streams are restricted to have at most a single `1'.

Each client receives a stream of inputs $x_1, \ldots, x_T \in\{0,1\}$, can communicate once to the server, and the goal of the server is to compute the number of clients such that $\one(\x)=1$. We will prove the following theorem. Note that this means that the correlation between the message a device sends, and $\one(\x)$ is at most $O(1/\sqrt{T})$

\begin{theorem}
\label{thm:lower_bound}
    Let $\mathcal{A}$ be an $\eps$-locally pan-private client-side algorithm for the \countnonzero\ task comprising of the pair $\textbf{State}$ and $\textbf{Out}$. Then there exists inputs $\x$ and $\x'$ such that $\one(\x)=0$, $\one(\x') = 1$, and 
    \begin{align*}
        TV(\textbf{Out}(\textbf{State}(\x)), \textbf{Out}(\textbf{State}(\x'))) \leq O((e^{\eps}-1)/\sqrt{{Te^\eps}}).
    \end{align*}
\end{theorem}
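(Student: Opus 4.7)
The plan is to apply the randomized-response decomposition from \cref{lem:its_2RR} to view the algorithm as a post-processing of $T$ independent RR bits, and then to exploit orthogonality of likelihood-ratio perturbations across time steps via a $\chi^2$ argument on a uniform mixture of single-$1$ streams.

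First, I would argue that for every $s_{t-1}$ reachable from some stream prefix, the update $\textbf{State}_t(\cdot;s_{t-1})$ is an $\eps$-DP two-input local randomizer on the incoming bit: otherwise one could take two streams that agree on the prefix and differ only at step $t$ and, via data processing, produce a violation of $\eps$-indistinguishability of the full state vectors guaranteed by local pan-privacy. Applying \cref{lem:its_2RR} at every step, I can then write $\textbf{State}_t(x;s_{t-1})=h_{t,s_{t-1}}(\RR{\eps}(x))$ for some (possibly randomized) post-processing $h$, so the entire trajectory---and hence $\textbf{Out}(\textbf{State}(\cdot))$---is a post-processing of the vector $y=(y_1,\ldots,y_T)$ with $y_i=\RR{\eps}(x_i)$. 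Take $\x=0^T$, let $\x^{(t)}$ denote the stream with a single $1$ at position $t$, and write $Q_0$ and $Q_t$ for the induced product distributions of $y$, and $Q_{\mathrm{mix}}=\tfrac1T\sum_t Q_t$.

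Second, I would compute $\chi^2(Q_{\mathrm{mix}}\|Q_0)$ directly. The ratio $L_t(y)=Q_t(y)/Q_0(y)$ equals $e^{\eps}$ if $y_t=1$ and $e^{-\eps}$ otherwise, so with $S(y)=\sum_i y_i$,
\[
\bar L(y) \;=\; \tfrac1T\sum_t L_t(y) \;=\; e^{-\eps} + (e^{\eps}-e^{-\eps})\,\tfrac{S(y)}{T}.
\]
Under $Q_0$, $S$ is a sum of $T$ i.i.d.\ Bernoulli$(1/(e^{\eps}+1))$ variables with variance $Te^{\eps}/(e^{\eps}+1)^2$; plugging in yields $\chi^2(Q_{\mathrm{mix}}\|Q_0)=\mathrm{Var}_{Q_0}(\bar L)=(e^{\eps}-1)^2/(Te^{\eps})$. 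The $1/T$ factor is the crux: the $L_t-1$ are centered and pairwise uncorrelated under $Q_0$ (each depends on a distinct independent coordinate $y_t$), so averaging reduces variance by $T$.

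Finally, combining $\mathrm{TV}\le\tfrac12\sqrt{\chi^2}$ with data processing of $\chi^2$ through the post-processing $y\mapsto\textbf{Out}(\textbf{State}(\cdot))$ gives
\[
\mathrm{TV}\!\bigl(\textbf{Out}(\textbf{State}(\x)),\;\tfrac1T\sum_t \textbf{Out}(\textbf{State}(\x^{(t)}))\bigr)\;\le\;\frac{e^{\eps}-1}{2\sqrt{Te^{\eps}}}.
\]
The theorem's ``there exist inputs $\x,\x'$'' form follows by a Le Cam / averaging argument: the mixture TV upper-bounds the advantage of the Bayes distinguisher between $\x=0^T$ and a uniformly random $\x^{(t^\star)}$ (all of which satisfy $\one(\x^{(t^\star)})=1$), so some specific $t^\star$ witnesses a concrete pair attaining the stated bound.

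The hardest part is the chi-squared calculation that buys the $1/\sqrt T$ amplification. The naive per-pair estimate $\chi^2(Q_t\|Q_0)=(e^{\eps}-1)^2/e^{\eps}$ has no $T$-dependence and yields only the trivial $O(e^{\eps}-1)$ TV bound already implied by $\eps$-DP; it is only by randomizing the location of the $1$ and using the orthogonality of $L_t-1$ across $t$ that the $1/T$ inside the $\chi^2$---and thus the $1/\sqrt T$ in TV---emerges. Passing from the mixture bound to a specific pair via Bayes testing is standard and essentially lossless in this setting.
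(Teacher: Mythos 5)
Your proof is correct in substance and takes a genuinely different route for the core technical step. The first reduction is the same as the paper's: you observe that each single-step update $\textbf{State}_t(\cdot;s_{t-1})$ inherits $\eps$-DP from local pan-privacy (the paper makes this explicit by factoring the density ratio of the full state trajectory over $t$ and noting only the differing coordinate contributes), and then apply the Kairouz--Oh--Viswanath decomposition to write the trajectory as a post-processing of $(\RR{\eps}(x_1),\ldots,\RR{\eps}(x_T))$. Where you diverge is the key calculation. The paper exploits \emph{exchangeability}: under the uniform-single-$1$ distribution the RR vector is exchangeable, so its histogram (the count $S=\sum_t y_t$) is a sufficient statistic, and the problem reduces to bounding the TV between $\Bin(T,\tfrac1{e^\eps+1})$ and $\Bin(T-1,\tfrac1{e^\eps+1})+\Ber(\tfrac{e^\eps}{e^\eps+1})$; that bound is proved in the appendix via De Moivre's closed-form mean-absolute-deviation formula for the binomial plus Jensen. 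You instead compute $\chi^2(Q_{\mathrm{mix}}\|Q_0)$ directly, using that the per-coordinate likelihood ratios $L_t-1$ are centered and mutually independent under $Q_0$ (so the variance of the averaged ratio drops by $T$), and then pass to TV via the standard inequality $\TV\le\tfrac12\sqrt{\chi^2}$ and data processing. The two arguments land on exactly the same constant $\tfrac{e^\eps-1}{2\sqrt{Te^\eps}}$ because both ultimately invoke Cauchy--Schwarz on essentially the same quantity; your version is arguably more modular, avoids exchangeability and De Moivre's formula entirely, and makes the $1/\sqrt T$ amplification transparent as an orthogonality phenomenon. Your highlighted ``hardest part'' discussion (why the naive per-pair $\chi^2$ loses the $T$-dependence) is exactly right. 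One caveat, which you share with the paper's own write-up: what is actually established is a bound on $\TV\bigl(\textbf{Out}(\textbf{State}(0^T)),\ \tfrac1T\sum_t \textbf{Out}(\textbf{State}(\x^{(t)}))\bigr)$, i.e.\ a mixture bound. Your closing ``Bayes testing / averaging'' step claiming this yields a \emph{specific pair} with small TV is not valid in general (a mixture can be close to $Q_0$ while every component $Q_t$ is far), and the paper makes the same leap without justification; this is immaterial downstream because \cref{thm:main_lb} only needs the mixture form, but your ``essentially lossless'' claim is overstated.
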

This bound, coupled with standard techniques (e.g. ~\citep{DuchiJW13}), yields the following lower bound. This shows that for large $T$, local pan-privacy must come at a significant cost.
\begin{theorem}
\label{thm:main_lb}
    Let $\mathcal{A}$ be an $\eps$-locally pan-private algorithm run on $n$ clients, and let $\hat{S}$ be any estimate of \countnonzero$(\x^{(1)},\ldots,\x^{(n)})$ derived from the the outputs of $\mathcal{A}$. For $T \geq 16e^\eps$, $\hat{S}$ has expected error $\Omega(\sqrt{nTe^{\eps}/(e^{\eps}-1)^2})$.
\end{theorem}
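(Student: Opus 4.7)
The plan is to apply the two-point Le Cam method via a Bayesian reduction in the style of~\citet{DuchiJW13}. Fix streams $\x,\x'$ from \cref{thm:lower_bound} and write $\mu_b$ for the output distribution on the $b$-th stream, so that $TV(\mu_0,\mu_1)\le \alpha$ with $\alpha = O((e^\eps-1)/\sqrt{Te^\eps})$. I would introduce two product priors over the $n$-client dataset: under $\pi_b$ (for $b\in\{0,1\}$) each client independently receives input $\x'$ with probability $\tfrac{1}{2}+b\delta$ and input $\x$ otherwise, for a parameter $\delta$ to be chosen. Since $E_{\pi_b}[\countnonzero] = n(\tfrac{1}{2}+b\delta)$, a Jensen/triangle argument shows that any estimator $\hat S$ has expected absolute error at least $\tfrac{n}{2}(\delta - TV(M_0,M_1))$ under some prior, where $M_b$ is the server's joint output under $\pi_b$.

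The crux is bounding $TV(M_0,M_1)$ for a suitable $\delta$. Each client's marginal output under $\pi_b$ is $\nu_b = (\tfrac{1}{2}-b\delta)\mu_0 + (\tfrac{1}{2}+b\delta)\mu_1$, and a direct calculation gives $\chi^2(\nu_1\|\nu_0) = 2\delta^2\int(\mu_1-\mu_0)^2/(\mu_0+\mu_1)$. Combining the TV bound $\int|\mu_1-\mu_0|\le 2\alpha$ from \cref{thm:lower_bound} with the pointwise ratio bound $|\mu_0-\mu_1|/(\mu_0+\mu_1) \le (e^\eps-1)/(e^\eps+1)$ that follows from $\eps$-indistinguishability of $\mu_0,\mu_1$ (a post-processing consequence of pan-privacy), I would obtain $\chi^2(\nu_1\|\nu_0) = O(\delta^2(e^\eps-1)\alpha)$. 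Tensorizing over the $n$ independent clients and applying Pinsker yields $TV(M_0,M_1) \le O(\delta\sqrt{n(e^\eps-1)\alpha})$, so choosing $\delta$ at the threshold where this is at most $1/4$ gives Le Cam's error bound $\Omega(n\delta) = \Omega(\sqrt{n/((e^\eps-1)\alpha)})$.

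Unpacking $\alpha$, this direct route yields $\Omega(\sqrt{n(Te^\eps)^{1/2}/(e^\eps-1)^2})$, short of the target by a $(Te^\eps)^{1/4}$ factor. The main obstacle, which I expect to be the hardest part, is closing this gap. I would do so by extracting a stronger statement from the proof of \cref{thm:lower_bound}: rather than a fixed $\x'$, use a random ``signal'' stream $\x^{(t)}$ with $t\in[T]$ uniform, so that the averaged output $\bar\mu := E_t[\mu_{\x^{(t)}}]$ enjoys a chi-squared bound $\chi^2(\bar\mu\|\mu_0) = O((e^\eps-1)^2/T)$; the cross-terms for $t\ne t'$ should cancel because the state at times before $\min(t,t')$ under the two streams is identically distributed and pan-privacy controls the divergence introduced at times $t,t'$. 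Running the Le Cam pipeline above with $\bar\mu$ in place of $\mu_1$ then gives the claimed $\sqrt{nTe^\eps/(e^\eps-1)^2}$ bound, with the hypothesis $T\ge 16e^\eps$ absorbing the remaining $\sqrt{e^\eps}$ constant factor.
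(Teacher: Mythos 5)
Your high-level strategy is the right one: the paper's proof really is the ``Le Cam with a Bernoulli-mixture prior and chi-squared tensorization'' pipeline from~\citet{DuchiJW13}, and you correctly notice that applying it naively with the $TV$ bound of \cref{thm:lower_bound} loses a $(Te^\eps)^{1/4}$ factor, so one must work with a chi-squared (or triangle-divergence) bound on the averaged signal distribution $\bar\mu = E_t[\mu_{\x^{(t)}}]$ rather than with $TV(\mu_0,\mu_{\x'})$ for a fixed $\x'$. That is exactly what ``coupled with standard techniques'' is hiding.

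Two of your steps, however, need repair. First, the cross-term cancellation in $\chi^2(\bar\mu\Vert\mu_0)=\frac{1}{T^2}\sum_{t,t'}\int(\mu_t-\mu_0)(\mu_{t'}-\mu_0)/\mu_0$ does \emph{not} follow from ``the state before $\min(t,t')$ is identically distributed'': the $\mu_t$ themselves are arbitrary post-processings and there is no a priori reason for the likelihood ratios $\mu_t/\mu_0$ and $\mu_{t'}/\mu_0$ to be uncorrelated under $\mu_0$. The correct route is to invoke \cref{postprocessing} to write everything as $g(\RR{\eps}(x_1),\ldots,\RR{\eps}(x_T))$ and use the data-processing inequality for $\chi^2$ (and for the triangle divergence $\int(p-q)^2/(p+q)$, which is an $f$-divergence). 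At the level of the RR bits $Y_1,\ldots,Y_T$, the likelihood ratio $P_t/P_0 = e^{\eps(2Y_t-1)}$ depends only on coordinate $t$, so the cross terms vanish \emph{because the $Y_t$ are independent}, not because of a pan-privacy argument at distinct times. Second, the resulting bound is $\chi^2(\bar\mu\Vert\mu_0)\le \tfrac{1}{T}\,\mathrm{Var}_{Y\sim\Ber(\frac{1}{e^\eps+1})}\bigl[e^{\eps(2Y-1)}\bigr] = \tfrac{(e^\eps-1)^2}{Te^\eps}$, not the $O((e^\eps-1)^2/T)$ you claim; the missing $1/e^\eps$ is precisely the $e^\eps$ in the numerator of the target $\Omega(\sqrt{nTe^\eps/(e^\eps-1)^2})$, and it must come from the variance computation, not from the hypothesis $T\ge 16e^\eps$. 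That hypothesis is there to keep $\chi^2(\bar\mu\Vert\mu_0)$ (and hence the per-client signal) bounded away from one so that the Pinsker/tensorization step and the choice $\delta\asymp 1/\sqrt{n\gamma^2}\in(0,\tfrac12)$ are legitimate; a multiplicative $\sqrt{e^\eps}$ cannot be ``absorbed'' by such an assumption. With these two fixes your argument goes through and recovers the stated bound.
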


\begin{figure}[ht]
\begin{algorithm}[H]
\caption{$\distinguisher$}\label{distinguisher}
    \begin{algorithmic}[1]
    \State \textbf{Require:}  $\Statefn$, $\out$, initial state $s_0$
    \State \textbf{Input:} $b$
    \If{$b=0$}
        \State \textbf{Return} $\out(\Statefn((0,\ldots,0);s_0))$
    \Else
        \State Select $\tilde{t}\sim \texttt{Uniform}([T])$
        \State Set $x_{\tilde{t}}=1$ and $x_t=0$ for $t\in[T]\backslash\{\tilde{t}\}$
        \State \textbf{Return} $\out(\Statefn((x_1,\ldots,x_T);s_0))$
    \EndIf
    \end{algorithmic}
\end{algorithm}
\end{figure}
 We will now prove \cref{thm:lower_bound}. We start by using the locally pan-private algorithm $\mathcal{A}$ to construct a randomized function \distinguisher\ from $\{0,1\}$ to the output space of the algorithm. On input `0', this algorithm runs $\mathcal{A}$ on the all $0$'s stream. On input `1', it will put a $1$ at a uniformly random place in the stream (with other $x_t$'s being $0$) and run $\mathcal{A}$ on the resulting stream.  Since $\one(\x)$ is different on these two streams, we expect this \distinguisher\ to behave differently enough on $0$ and $1$. We will argue that the local pan-privacy constraint prevents these distributions from being too far.

Towards this goal, we first argue that the output of a locally pan-private algorithm can be obtained as a post-processing of randomized responses of individual $x_t$'s.
\begin{lemma}\label{postprocessing}
    Given an initial state $s_0$, there exists a post-processing function $g$ such that 
    \[\Statefn((x_1,\ldots,x_T);s_0)=g(\RR{\epsilon}(x_1), \ldots, \RR{\epsilon}(x_T)).\]
\end{lemma}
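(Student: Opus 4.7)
The plan is to show that for each time step $t$ and each state $v \in \internalstatespace$ reachable at time $t-1$, the single-step transition $\Statefn_t(\cdot;v)$ is an $\epsilon$-DP local randomizer on the binary input $x_t$. Once this is established, \cref{lem:its_2RR} lets us realize each transition as a post-processing of a randomized response of $x_t$, and we chain these together to construct $g$.

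First I would establish the per-step DP claim. Fix $t$ and let $\x,\x'$ be two input streams that agree everywhere except at position $t$. By local pan-privacy the full state vectors $(s_0, s_1,\ldots,s_T)$ under $\x$ and under $\x'$ are $\epsilon$-indistinguishable, and projecting onto the coordinates $(s_{t-1}, s_t)$ preserves $\epsilon$-indistinguishability. The marginal of $s_{t-1}$ depends only on the shared prefix $x_1,\ldots,x_{t-1}$ and is therefore identical under both streams. Hence for any reachable $v$ (i.e., $\Pr[s_{t-1}=v]>0$) and any set $S$, dividing by $\Pr[s_{t-1}=v]$ yields
\begin{equation*}
\Pr_{\x}[s_t\in S\mid s_{t-1}=v] \;\le\; e^{\epsilon}\,\Pr_{\x'}[s_t\in S\mid s_{t-1}=v],
\end{equation*}
and the reverse inequality by symmetry. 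These conditionals are precisely the distributions of $\Statefn_t(x_t;v)$ and $\Statefn_t(x'_t;v)$, so $\Statefn_t(\cdot;v)$ is an $\epsilon$-DP $2$-input local randomizer.

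Next I would apply \cref{lem:its_2RR} to each $\Statefn_t(\cdot;v)$: for every reachable $v$ there is a (possibly randomized) post-processing $h_t(\cdot;v):\{0,1\}\to\internalstatespace$ such that $h_t(\RR{\epsilon}(b);v)$ is distributed identically to $\Statefn_t(b;v)$ for $b\in\{0,1\}$ (extend $h_t$ arbitrarily on unreachable $v$; such values will never be queried). I then define $g(y_1,\ldots,y_T)$ by taking the initial state $s_0$ from the initialization, computing $s_t := h_t(y_t;s_{t-1})$ for $t=1,\ldots,T$ using fresh independent internal randomness in each call, and returning $(s_0,s_1,\ldots,s_T)$.

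Finally I would verify by induction on $t$ that $g(\RR{\epsilon}(x_1),\ldots,\RR{\epsilon}(x_T))$ has the same distribution as $\Statefn((x_1,\ldots,x_T);s_0)$. The base case is trivial since $s_0$ is deterministic; for the inductive step, the hypothesis ensures $(s_0,\ldots,s_{t-1})$ is correctly distributed, and conditioning on any reachable $s_{t-1}=v$, the distribution of the newly computed $s_t$ is $h_t(\RR{\epsilon}(x_t);v)$, which agrees with $\Statefn_t(x_t;v)$ by \cref{lem:its_2RR}; independence of $\RR{\epsilon}(x_t)$ and of $h_t$'s internal randomness from everything generated earlier upgrades the conditional match to a joint match. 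The main obstacle, mild in our pure-DP ($\delta=0$) setting, is the conditioning step that converts joint $\epsilon$-indistinguishability of the state vectors into per-transition $\epsilon$-DP; the remainder is a routine inductive chaining.
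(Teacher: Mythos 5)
Your proof is correct and takes essentially the same route as the paper's: establish that each per-step transition $\Statefn_t(\cdot;s_{t-1})$ inherits $\epsilon$-DP from the joint $\epsilon$-indistinguishability of the state vectors, invoke \cref{lem:its_2RR} to realize each transition as a post-processing of a randomized response, and chain these post-processings into $g$. Your derivation of the per-step bound conditions on the shared marginal of $s_{t-1}$, whereas the paper cancels factors in the product of transition probabilities; the two are equivalent, and you are a bit more explicit than the paper about reachability of states and about the inductive verification that the chained $g$ reproduces the correct joint distribution.
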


\begin{proof} 
We will first show that if $\Statefn$ is $\epsilon$-DP then for all $t\in[T]$ and any state $s_{t-1}$, $\Statefn_t(\cdot;s_{t-1})$ is $\epsilon$-DP. Let $t\in[T]$, $s_{t-1}\in\internalstatespace$, and $x_t,x_t'\in\{0,1\}$. Further, let $\{x_{t'}\}_{t'\in[T]\backslash\{t\}}\in\{0,1\}^{T-1}$ and $\{s_{t'}\}_{t'\in[T]\backslash\{t\}}\in\internalstatespace^{T-1}$, then
\begin{align*}
    \frac{\Pr(\Statefn_t(x_t;s_{t-1})=s_t)}{\Pr(\Statefn_t(x_t';s_{t-1})=s_t)} &= \frac{\Pr(\Statefn_t(x_t;s_{t-1})=s_t)}{\Pr(\Statefn_t(x_t';s_{t-1})=s_t)}\prod_{t'\in[T]\backslash \{t\}}\frac{\Pr(\Statefn_{t'}(x_{t'};s_{t'-1})=s_{t'})}{\Pr(\Statefn_{t'}(x_{t'};s_{t'-1})=s_{t'})}\\
    &= \frac{\Pr(\Statefn((x_1,\ldots,x_t,\ldots,x_T);s_0)=(s_1,\ldots,s_T))}{\Pr(\Statefn((x_1,\ldots,x_t',\ldots,x_T);s_0)=(s_1,\ldots,s_T))}\\
    &\le e^{\epsilon}.
\end{align*}
Thus the map $\Statefn_t(\cdot;s_{t-1})$ can be viewed as a local DP mechanism, and thus \cref{lem:its_2RR} implies that it can be written as a postprocessing of a randomized response.
Given $t\in[T]$ and $s_{t-1}\in\internalstatespace$, let $h_{t,s_{t-1}}:\{0,1\}\to\internalstatespace$ be such that for $b\in\{0,1\}$, $\Statefn_t(b;s_{t-1})=h_{t,s_{t-1}}(\RR{\epsilon}(b))$. Then we can define $g(b_1, \ldots, b_T)=(s_1,\ldots,s_T)$ where $s_t=h_{t,s_{t-1}}(b_t)$.
\end{proof}

We next argue that this, and the fact that our distributions on $\x$ are exchangeable implies that the output really is a post-processing of the {\em count} of the number of $1$s in the randomized responses.
\begin{lemma}
\label{lem:tv_small}
    Given any initial state $s_0$, there exists a post-processing function $h$ such that $\distinguisher(0)=h(\Bin(T,\frac{1}{e^{\epsilon}+1}))$ and $\distinguisher(1)=h(\Bin(T-1,\frac{1}{e^{\epsilon}+1})+\Ber(\frac{e^{\epsilon}}{e^{\epsilon}+1}))$. Further, 
    \begin{align}
    &\TV(\distinguisher(0), \distinguisher(1))\\&\;\;\;\;\le \TV\left(\small{\Bin(T,\tfrac{1}{e^{\epsilon}+1}),\Bin(T-1,\tfrac{1}{e^{\epsilon}+1})+\Ber(\tfrac{e^{\epsilon}}{e^{\epsilon}+1})}\right) \label{eqn:tv_f_1}\\
    &\;\;\;\;=O\left((e^{\eps}-1)\sqrt{\frac 1 {Te^{\eps}}}\right). \label{eqn:tv_f_2}
    \end{align}
\end{lemma}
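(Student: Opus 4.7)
The plan is to reduce $\TV(\distinguisher(0),\distinguisher(1))$ to a TV distance between two binomial-type distributions on $\{0,\ldots,T\}$, then estimate that distance by a local-CLT-style bound on the binomial pmf. First I would invoke \cref{postprocessing} to write $\textbf{Out}(\Statefn((x_1,\ldots,x_T);s_0)) = \textbf{Out}(g(b_1,\ldots,b_T))$, where $b_t=\RR{\eps}(x_t)$ and $g$ is a deterministic postprocessing. For $\distinguisher(0)$ all inputs are $0$, so the $b_t$ are i.i.d.\ $\Ber(p)$ with $p=1/(e^\eps+1)$, and the count $\sigma:=\sum_t b_t\sim\Bin(T,p)$. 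For $\distinguisher(1)$, averaging over the uniform choice of $\tilde t$ makes the joint law of $(b_1,\ldots,b_T)$ \emph{exchangeable}, with exactly one coordinate drawn from $\RR{\eps}(1)$ and the remaining $T-1$ from $\RR{\eps}(0)$; hence $\sigma\sim\Bin(T-1,p)+\Ber(q)$ with $q=e^\eps/(e^\eps+1)$.

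Next I would exploit the exchangeability: on $\{0,1\}^T$, any exchangeable law is, conditional on its coordinate sum, uniform over the $\binom{T}{\sigma}$ sequences with that sum. So there is a single randomized map $\phi$ (``uniform completion given the count'') such that $(b_1,\ldots,b_T)\stackrel{d}{=}\phi(\sigma)$ under both input distributions of $\distinguisher$. Setting $h:=\textbf{Out}\circ g\circ \phi$ therefore yields $\distinguisher(0)=h(\Bin(T,p))$ and $\distinguisher(1)=h(\Bin(T-1,p)+\Ber(q))$, and the data-processing inequality for $\TV$ gives \cref{eqn:tv_f_1} immediately.

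For \cref{eqn:tv_f_2} I would write the two count distributions as $X+\Ber(p)$ and $X+\Ber(q)$ respectively, with $X\sim\Bin(T-1,p)$ and the added Bernoulli independent of $X$. Expanding the pmfs, the pointwise difference at $k$ is $(q-p)\bigl(\Pr[X=k-1]-\Pr[X=k]\bigr)$, so
\[
\TV = \tfrac{q-p}{2}\sum_k \bigl|\Pr[X=k]-\Pr[X=k-1]\bigr| = (q-p)\cdot \max_k \Pr[X=k],
\]
using that the binomial pmf is unimodal, which collapses the telescoping sum to $2\max_k\Pr[X=k]$. A standard Stirling bound gives $\max_k\Pr[X=k]=O(1/\sqrt{(T-1)p(1-p)})$. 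Substituting $q-p=(e^\eps-1)/(e^\eps+1)$ and $p(1-p)=e^\eps/(e^\eps+1)^2$, the $(e^\eps+1)$ factors cancel and the bound reduces to $O((e^\eps-1)/\sqrt{Te^\eps})$, as required.

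The main obstacle I anticipate is the symmetrization step: \cref{postprocessing} alone only produces a postprocessing of the raw vector of responses, and the map $g$ is in general not symmetric. Getting from this to a postprocessing of the scalar count $\sigma$ crucially uses both the uniform random position of the ``1'' in $\distinguisher(1)$ and the elementary fact that exchangeable binary vectors are uniform given their sum; once that reduction is in place the remainder is a mechanical binomial-smoothing calculation.
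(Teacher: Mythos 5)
Your proposal is correct. For the reduction to a function of the scalar count it coincides with the paper's route: apply \cref{postprocessing} to express the state sequence as a postprocessing $g$ of the iid randomized responses, note that under both inputs to \distinguisher\ the response vector is exchangeable, and use the fact that an exchangeable $\{0,1\}^T$ vector is uniform given its coordinate sum to build the map $h$ --- this is exactly what the paper's Algorithm~\ref{postprocessdistinguisher} does, and data processing then yields \cref{eqn:tv_f_1}.

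For \cref{eqn:tv_f_2} your calculation is a genuinely different and cleaner route than the paper's appendix. You expand the two count pmfs as convolutions $X+\Ber(p)$ and $X+\Ber(q)$ with $X\sim\Bin(T-1,p)$, observe that the pointwise difference equals $(q-p)\bigl(\Pr[X=k]-\Pr[X=k-1]\bigr)$, and collapse the resulting telescoping sum via unimodality of the binomial to $2(q-p)\max_k\Pr[X=k]$, finishing with the Stirling bound on the mode. The paper instead couples $\Ber(p)$ with $\Ber(1-p)$ to reduce to $\TV\bigl(\Bin(T-1,p),\Bin(T-1,p)+1\bigr)$, rewrites that $\ell_1$-sum as a mean absolute deviation of $\Bin(T,p)$, invokes De Moivre's exact MAD formula, and bounds it by Jensen. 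Both paths give the same order, but yours avoids the detour through De Moivre and is arguably the more elementary presentation. One caveat worth stating explicitly: the mode bound $\max_k\Pr[\Bin(T-1,p)=k]=O(1/\sqrt{Tp(1-p)})$ requires $Tp(1-p)\gtrsim 1$, i.e.\ $T\gtrsim e^\eps$ here; this is consistent with the hypothesis $T\ge 16e^\eps$ in \cref{thm:main_lb}, and the paper's appendix records the analogous degenerate case ($p<1/T$) separately, so you should do the same.
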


\begin{proof}
By Lemma~\ref{postprocessing}, there exists $g:\{0,1\}^T\to\internalstatespace^T$ such that $\Statefn((x_1,\ldots,x_T);s_0)$ can be written as $g(\RR{\epsilon}(x_1),\ldots,\RR{\epsilon}(x_T))$. Under our distribution on $\x$ conditioned on $b$, the random variables $x_t$'s are exchangeable, and thus conditioned on the count of $1$s in $\RR{\eps}(x_i)$'s, all permutations are equally likely. The function $h$ is then described in Algorithm~\ref{postprocessdistinguisher}. This implies the first part of the claim. The inequality \cref{eqn:tv_f_1} is a consequence of the data processing inequality.  
\begin{figure}[ht]
\begin{algorithm}[H]
\caption{$h$}\label{postprocessdistinguisher}
    \begin{algorithmic}[1]
    \State \textbf{Require:} $g$, $\out$
    \State \textbf{Input:} $m\in[T]$
    \State Sample $\mathbf{b}$ uniformly at random from $\{\mathbf{b}\in\{0,1\}^T\;|\; |\mathbf{b}|_1=m\}$.
    \State \textbf{Return} $\out(g(\mathbf{b})).$
    \end{algorithmic}
\end{algorithm}
\end{figure}
The second inequality is standard. For completeness, we give a proof in \cref{app:binomials_tv}.

\end{proof}

\cref{lem:tv_small} implies \cref{thm:lower_bound} and completes the proof of our lower bound.

\section{The Single-Server Model}
\label{sec:ub-single}
In this section, we will discuss the single-server model. We will require a rerandomizable public-key encryption scheme. Our computational local pan-privacy will be achieved by the device storing encryptions of any sensitive state.  As the device does not hold the private key, the on-device state is computationally indistinguishable from a sequence of encryptions of 0.

\subsection{Counting Devices that have at least one occurrence}

Let us first consider the problem of differentially privately computing the number of clients for which the sensitive event occurs at least once. In Section~\ref{sec:histograms} we will discuss how to extend our approach to building a histogram of the number of occurrences of the sensitive event. 
We start by describing how the count will be stored and updated on device. At the beginning of the collection period, the client initializes the state to be an encryption of 0. At each time step, the device updates the state. If $x_t$ is 1, i.e. if the sensitive event has occurred in this time step, the device replaces the state with a fresh encryption of 1, with an appropriate number of rerandomizations applied. If $x_t$ is 0, the device rerandomizes the current state. This continues until the end of the time horizon. Pseudo-code is given in Algorithm~\ref{localpanprivcounting}. This algorithm does not reveal anything about whether or when updates have occurred to an intruder, even if the intruder can view the internal state of the device after each time step. The intrusion resistance comes from the fact that an intruder can not distinguish between a fresh (rerandomized) encryption of 1, and a rerandomization a current encrypted value. Formally, the state after $t$ steps is computationally indistinguishable from $\rerandomize^{t}(\enc(0))$.  
Finally, the device uses randomized response to send their encrypted bit to the server. This can be done without needing to decrypt the state using~\cref{obs:rr_implementation}. The sent message is computationally indistinguishable from $\rerandomize^{T+1}(\enc(0))$. We remark that for most practical cryptosytems, $\rerandomize^t(\enc(b))$ has the same distribution as $\enc(b)$ so that we can  optimize the algorithm by replacing the $\rerandomize^t(\enc(b))$ steps by $\enc(b)$.
\begin{figure}[ht]
\begin{algorithm}[H]
\caption{\countnonzero, Client Algorithm}\label{localpanprivcounting}
    \begin{algorithmic}[1]
    \Require $T, \epsilon_0, \enc, \rerandomize, \mathbf{x} = x_1,\ldots,x_T$
    \State \underline{\textbf{Initialization}}
    \State $c=\enc(0)$
    \vspace{0.1in}
    \State \underline{\textbf{State Update}}
    \For{$t=1:T$}
        \If{$x_t = 1$}
            \State $c=\rerandomize^t(\enc(1))$
        \Else
            \State $c=\rerandomize(c)$
        \EndIf
    \EndFor
    \vspace{0.1in}
    \State \underline{\textbf{Send to Server}}
    \State $r={\rm Ber}(\frac{e^{\epsilon_0}-1}{e^{\epsilon_0}+1})$
    \If{$r=0$}
        \State $r'={\rm Ber}(1/2)$
        \If{$r'=1$}
            \State $c=\rerandomize^{T+1}(\enc(0))$
        \Else
            \State $c=\rerandomize^{T+1}(\enc(1))$
        \EndIf
    \Else
        \State $c = \rerandomize(c)$
    \EndIf
    \State \textbf{Return} $c$
    \end{algorithmic}
\end{algorithm}
\end{figure}

At the end of the collection period, the goal of the server is to compute a differentially private estimate of the number of devices for which the sensitive event occurred at least once. Since they have the private key, they can decrypt the local reports from each client, then aggregate and de-bias the resulting sum in the same way they would for randomized response. The following result follows:

\begin{theorem}\label{thm:ub_countnonzero}
    Let $\eps_0>0$. Suppose that each client $i$ uses \cref{localpanprivcounting} on its input $\x^{(i)}$. Then the server can estimate \countnonzero\ on inputs $\{\x^{(i)}\}_{i=1}^n$ with expected error $O\left(\sqrt{n\left(1+\tfrac{e^{\eps_0}}{(1+e^{\eps_0})^2}\right)}\right)$ and $\eps_0$-local differential privacy. For any $(\eps,\delta) \in (0,1)$, there is an $\eps_0$ such that the mechanism is $(\eps,\delta)$-aggregator DP, and has expected error $O(\sqrt{\log\tfrac 1 \delta}/\eps)$. The client algorithm satisfies computational $0$-local pan-privacy.
\end{theorem}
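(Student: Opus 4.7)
The plan is to prove the three claims separately. For the utility and DP guarantees, the plan is to verify that the plaintext of the ciphertext that each client ultimately sends to the server is distributed exactly as $\RR{\eps_0}(\one(\x^{(i)}))$. By correctness of the encryption scheme and the plaintext-preservation built into \cref{def:rerandomizable}, a straightforward induction on $t$ shows that the plaintext of $c$ at the end of the State Update loop equals $\one(\x^{(i)})$: rerandomizations preserve the plaintext, any $x_t=1$ resets it to $1$, and an all-zeros prefix leaves it at $0$. The Send to Server block then directly implements \cref{obs:rr_implementation}: with probability $(e^{\eps_0}-1)/(e^{\eps_0}+1)$ the plaintext is preserved (via a fresh rerandomization), and otherwise it is overwritten by a uniformly random bit (via $\rerandomize^{T+1}(\enc(r'))$ with $r'$ uniform). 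Thus on decrypting each client's message the server obtains $\RR{\eps_0}(\one(\x^{(i)}))$, and the stated error bounds together with the $\eps_0$-local and $(\eps,\delta)$-aggregator DP guarantees follow immediately from \cref{thm:rr_local} and \cref{thm:rr_agg}.

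For the computational $0$-local pan-privacy claim, the plan is to exhibit a canonical distribution on state sequences that does not depend on $\x$ and to which $\textbf{State}(\x)$ is computationally indistinguishable; transitivity then yields indistinguishability of the state sequences on any two streams. I would take the canonical distribution to be $\textbf{State}(\mathbf{0})$, the state sequence produced on the all-zeros stream, which is just the iterated rerandomization chain $c_0=\enc(0),\ c_t=\rerandomize(c_{t-1})$ (plus the Send to Server step, whose randomness is independent of $\x$).

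To establish this indistinguishability, the plan is a hybrid argument driven by the rerandomizable property (\cref{def:rerandomizable}). Let $t_1<\cdots<t_k$ be the positions with $x_{t_j}=1$. In hybrid $H_j$ we run \cref{localpanprivcounting} on $\x$ but at each position $t_{j'}$ with $j'\le j$ we replace the update $c\leftarrow\rerandomize^{t_{j'}}(\enc(1))$ by $c\leftarrow\rerandomize(c)$. Then $H_0$ is the real sequence on $\x$ and $H_k$ is distributed as $\textbf{State}(\mathbf{0})$. Between $H_{j-1}$ and $H_j$, an easy induction shows that $s_{t_j-1}$ has the form $\rerandomize^{t_j-1}(\enc(m))$ for some plaintext $m$ (determined by $x_1,\ldots,x_{t_j-1}$), so the rerandomizable property directly gives that $(s_{t_j-1},\rerandomize^{t_j}(\enc(1)))$ and $(s_{t_j-1},\rerandomize(s_{t_j-1}))$ are computationally indistinguishable. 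The states after index $t_j$ are either rerandomization-chain continuations of the value at $t_j$ or, at later positions $t_{j+1},\ldots,t_k$, fresh resets identical in both hybrids, so a polynomial-time reduction can embed the challenge pair $(c,\bar c)$ into the full state sequence. A union bound over the $k\le T$ hybrids gives negligible total distinguishing advantage. The main obstacle I anticipate is the bookkeeping in the reduction: the prefix $(s_0,\ldots,s_{t_j-1})$ must end at the challenger's $c$, and the suffix $(s_{t_j+1},\ldots,s_T)$ must be generated from $\bar c$ with the correct joint distribution. This is handled by picking the challenge plaintext $m$ equal to $\one(x_1,\ldots,x_{t_j-1})$ so that $c$ has the correct marginal, generating the prefix forward conditioned on terminating at $c$, and using iterated rerandomization on $\bar c$ for the suffix until the next 1-position (which in both hybrids is an independent fresh reset).
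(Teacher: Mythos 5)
Your treatment of the utility and DP claims is correct and matches the paper's (brief) argument: the State Update loop keeps the plaintext of $c$ equal to $\one(\x^{(i)})$, the Send-to-Server block implements \cref{obs:rr_implementation}, and the bounds follow from \cref{thm:rr_local,thm:rr_agg}. The pan-privacy part is where the paper says almost nothing beyond ``the state after $t$ steps is computationally indistinguishable from $\rerandomize^t(\enc(0))$,'' so your hybrid argument is in the same spirit and is actually more explicit.

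However, your reduction has a gap at exactly the step you flagged. Under \cref{def:rerandomizable} the challenger hands the adversary only the pair $(c,\bar c)$ (resp.\ $(c,\bar c')$); it does not expose the intermediate ciphertexts $\enc(m),\rerandomize(\enc(m)),\ldots,\rerandomize^{t_j-2}(\enc(m))$ that led to $c$. Your plan to ``generate the prefix forward conditioned on terminating at $c$'' is posterior sampling of a Markov chain given its endpoint, which the abstract definition gives no way to do efficiently; a p.p.t.\ reduction cannot in general produce $(s_0,\ldots,s_{t_j-2})$ with the correct joint distribution given $c$. For concrete schemes like ElGamal this is fine because $\rerandomize$ is a group action and can be ``undone'' (sample $s_{t_j-2}$ from $c$ by applying an inverse randomizer), but the paper's definition does not promise that, and the proof as written needs either (i) a strengthened rerandomizability definition in which the challenge includes the whole chain $(s_0,\ldots,s_t,\bar c)$, or (ii) an added assumption that the prefix conditioned on $c$ is efficiently samplable. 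Separately, there is a small bookkeeping slip: in hybrid $H_{j-1}$ the positions $t_1,\ldots,t_{j-1}$ have already been replaced by rerandomizations, so the plaintext of $s_{t_j-1}$ is always $0$, not $\one(x_1,\ldots,x_{t_j-1})$ as computed from the original stream. This does not break the hybrid (the rerandomizability property holds for any $m,m'$), but your ``correct marginal'' justification would use the wrong $m$.
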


\subsection{Computing Histograms of the Number of Occurrences}
\label{sec:histograms}
In this section we will consider how to compute a histogram of the number of clients that have a particular number of occurrences. We will solve a slightly more general problem, where we are given a $k$ and the goal is to estimate for each $i \in \{0,1,\ldots, k-1\}$ the number of devices with count $|\mathbf{x}|_1$ equal to $i$, as well as the number of devices with count at least $k$. While one could always set $k=T$, this generalization can allow more efficient solutions in the regime where $T$ is large and the number of occurrences per device is much smaller than $T$. To handle this last bucket of count at least $k$, we will maintain an (encrypted) indicator $d_i$ for each $i$, of the event $|\mathbf{x}|_1 \geq i$.

At the beginning of the collection period, the device initializes $k+1$ counters to be an encryption of the initial histogram, i.e. $c_0=\enc(1), c_1=\enc(0), \ldots, c_k=\enc(0)$. Additionally, it initializes $d_0 = \enc(1)$ and $d_i = \enc(0)$ for each $i=1,\ldots,k$. At each time step, if an event has not occurred, all the counters are rerandomized. If an event has occurred then all the counters are rerandomized \emph{and} shifted by 1. The counter $c_0$ is set to be an $\enc(0)$, and $d_0$ is set $\enc(1)$. At the end of $T$ steps, the device has the desired values $(v_0,v_1,\ldots,v_k)$ in $(c_0,c_1,\ldots,c_{k-1},d_k)$, since $c_i$ encrypts $\one(|\x|_1=i)$ and $d_k$ encrypts $\one(|\x|_1 \geq k)$. It uses randomized response on each of these to send the result to the server.
Pseudo-code is given in Algorithm~\ref{alg:histograms}.

\begin{figure}[!h]
\begin{algorithm}[H]
\caption{Counter, Client Algorithm}\label{alg:histograms}
    \begin{algorithmic}[1]
    \Require $k, T, \enc, \rerandomize, \x = x_1,x_2,\ldots,x_T$.
    \State \underline{\textbf{Initialization}}
    \State $c_0=\enc(1)$; $d_0=\enc(1)$
    \For{$i=1:k$}
        \State $c_i=\enc(0)$; $d_i = \enc(0)$
    \EndFor
    \vspace{0.1in}

    \State \underline{\textbf{State Update}}
    \For{$t=1:T$}
       \If{$x_t = 0$}
        \For{$i=0:k$}
            \State $c_i=\rerandomize(c_i)$, $d_i=\rerandomize(d_i)$.
        \EndFor
       \Else
            \State $c_0=\rerandomize^{t}(\enc(0))$; $d_0=\rerandomize^t(\enc(1))$
            \For{$i=1:k$}
                \State $c_i=\rerandomize(c_{i-1})$; $d_i=\rerandomize(d_{i-1})$
            \EndFor
        \EndIf
    \EndFor
    \vspace{0.1in}

    \State \underline{\textbf{Send to Server}}
    \For{$i=0:k-1$}
        \State $v_i=\rerandomize(c_i)$
    \EndFor
    \State $v_k = \rerandomize(d_k)$
    \For{$i=0:k$}
        \State $r={\rm Ber}(\frac{e^{\epsilon_0}-1}{e^{\epsilon_0}+1})$
        \If{$r=0$}
            \State $r'={\rm Ber}(1/2)$
            \If{$r'=1$}
                \State $v_i=\rerandomize^{T+1}(\enc(0))$
            \Else
                \State $v_i=\rerandomize^{T+1}(\enc(1))$
            \EndIf
        \EndIf
    \EndFor

    \State \textbf{Return} $(v_0, \ldots, v_k)$
    \end{algorithmic}
\end{algorithm}
\end{figure}

Upon receiving the reports from each client, the server can decrypt the randomized responses. We sum these reports and use the standard de-biasing for randomized response to obtain an unbiased estimate of each histogram bucket count. Note that the sensitivity of the vector $(v_1,\ldots, v_k)$ is $1$ as at most one of the values can be $1$. Thus even though $k$ randomized responses are sent, the noise added does not grow with $k$. Since we run randomized response on each coordinate of the histogram, \cref{thm:rr_local} and \cref{thm:rr_agg} implies that
\begin{theorem}\label{thm:ub_histogram}
    \cref{alg:histograms} is computationally $0$-locally pan-private, and $\eps_0$-local DP with respect to the server. It estimates the histogram with expected error $O\left(\sqrt{n\left(1+\tfrac{e^{\eps_0}}{(1+e^{\eps_0})^2}\right)}\right)$ for each bucket count. Moreover, for any $(\eps,\delta) \in (0,1)$, there is an $\eps_0$ such that the mechanism is $(\eps,\delta)$-aggregator DP, and has expected error $O(\sqrt{\log\tfrac 1 \delta}/\eps)$ for each bucket count.
\end{theorem}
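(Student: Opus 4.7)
The proof naturally splits into three components: correctness of the on-device computation, computational local pan-privacy of the state sequence, and the server-side privacy/utility guarantees.

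For correctness, I would induct on $t$, maintaining the invariant that after processing $x_1,\ldots,x_t$ the ciphertext $c_i$ decrypts to $\one(|\x_{1:t}|_1 = i)$ for $i=0,\ldots,k$, and $d_i$ decrypts to $\one(|\x_{1:t}|_1 \geq i)$ for $i=0,\ldots,k$. The base case is immediate from initialization. In the inductive step, if $x_{t+1}=0$ all rerandomizations preserve plaintexts and the count is unchanged. If $x_{t+1}=1$, the ``shift'' branch implements the recurrences $\one(|\x_{1:t+1}|_1 = i) = \one(|\x_{1:t}|_1 = i-1)$ for $i \geq 1$ and the boundary cases $c_0 = \enc(0)$, $d_0 = \enc(1)$. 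This gives that $(c_0,\ldots,c_{k-1},d_k)$ encodes the desired one-hot histogram of $|\x|_1$ bucketed at $k$, so the $v_i$'s are exactly the per-bucket randomized responses (by Observation~\ref{obs:rr_implementation}).

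For computational local pan-privacy, I would first establish a structural invariant: at time $t$, every ciphertext in the state has the form $\rerandomize^{t}(\enc(b))$ for some plaintext $b$, where the rerandomization count depends only on $t$ (not the input). This holds because both branches add exactly one rerandomization per step---either by applying $\rerandomize$ directly, or by installing a freshly minted $\rerandomize^{t}(\enc(\cdot))$, or by setting $c_i=\rerandomize(c_{i-1})$ (which inherits $t-1$ previous rerandomizations and picks up one more). Given this invariant, I would run a hybrid argument over the fresh encryptions produced during the algorithm's execution: for each slot $(i, t)$, repeatedly apply Definition~\ref{def:rerandomizable} to argue that the pair $(c, \rerandomize(c))$ is indistinguishable from $(c, \rerandomize(c'))$ where $c'$ comes from a different plaintext, thereby allowing us to replace the plaintext threading through that slot with $0$. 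After zeroing out all $O(kT)$ fresh encryptions, the state sequence on input $\x$ becomes indistinguishable from the state sequence on the all-zeros input; applying the same transformation to $\x'$ and invoking transitivity yields the claim.

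The main obstacle is the hybrid bookkeeping: we have $2(k+1)$ parallel ciphertext slots, each slot's trajectory is a sequence of rerandomizations punctuated by occasional fresh $\enc(\cdot)$ calls at input-dependent time steps, and the distinguisher sees the entire multi-slot history. The careful point is that Definition~\ref{def:rerandomizable} provides indistinguishability of the two-element tuple $(c, \rerandomize(c))$ under a swap of the underlying plaintext---we then extend this along the trajectory of a single slot by composing the two-element guarantee across consecutive rerandomizations (inserting intermediate hybrids that gradually ``reveal'' more of the trajectory), and across slots by a standard polynomial-loss union bound over the $O(kT)$ encryption events.

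Finally, for the server-side guarantees, since the pre-randomized-response vector $(v_0,\ldots,v_k)$ is one-hot (exactly one coordinate equals $1$ by the correctness argument), two clients' vectors differ in at most two coordinates, and the estimator for each bucket is simply a sum of independent randomized responses across clients followed by standard de-biasing. The per-bucket expected error bounds and the local and aggregator DP claims then follow directly from Theorems~\ref{thm:rr_local} and~\ref{thm:rr_agg} applied bucket-by-bucket, with the one-hot structure absorbing the composition cost across buckets as noted in the discussion preceding the theorem.
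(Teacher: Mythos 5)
Your three-part decomposition (correctness, computational local pan-privacy, server-side guarantees) matches the structure implicit in the paper, which itself gives essentially no proof of this theorem beyond a one-paragraph remark. Your correctness invariant is right, your identification that the rerandomization depth must be input-independent is the key structural fact, and your reduction from the full-history distinguisher to the two-element guarantee of \cref{def:rerandomizable} by forward-simulating further rerandomizations with the public key is sound.

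The one place where your sketch is off is the description of what the hybrid converges to. You say you ``replace the plaintext threading through that slot with $0$'' and that ``after zeroing out all $O(kT)$ fresh encryptions, the state sequence on input $\x$ becomes indistinguishable from the state sequence on the all-zeros input.'' But the all-zeros execution is \emph{not} a state sequence whose plaintexts are all $0$: in it, $c_0$ and $d_0$ carry plaintext $1$ at every time step, and, more importantly, it has no fresh encryptions at all after initialization --- every ciphertext at time $t$ is a rerandomization of the \emph{same} slot's ciphertext at time $t-1$, with no cross-slot shifts. Your hybrid must therefore change two things at once: the underlying plaintexts \emph{and} the dependency/parent structure. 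Fortunately \cref{def:rerandomizable} enables exactly this, since $(c,\rerandomize(c))\approx(c,\rerandomize(c'))$ holds for an \emph{independent} $c'$ of a possibly different plaintext, so a single application of the assumption both unlinks a ciphertext from its parent and resets its plaintext. The clean formulation is: walk over the $O(kT)$ positions where the $\x$-execution and the $\mathbf{0}$-execution differ (the shift and reset steps), and at each one use the definition to swap ``rerandomization of $c_{i-1}^{(t-1)}$ with the $\x$-plaintext'' for ``rerandomization of $c_i^{(t-1)}$ with the $\mathbf{0}$-plaintext,'' simulating all downstream rerandomizations. As written, your hybrid lands on ``all fresh, all plaintext $0$,'' which is not equal to the $\mathbf{0}$-execution state sequence, and you would need an additional re-linking pass to close that gap. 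This is a fixable bookkeeping issue, not a wrong idea; the rest of the argument (and the server-side part, which correctly uses the one-hot structure and \cref{thm:rr_local,thm:rr_agg}) is fine.
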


\subsection{Computing the Average Number of Occurrences per Device}

Computing the average number of occurrences per-device can be done using histograms, under the assumption that no count is larger than $k$. Alternatively, the resulting average can be viewed as an average of a truncation (at $k$) of the original counts. However using the estimate
   $ \sum_{j} c_j = \sum_{i=1}^k i \cdot |\{j: c_i = j\}|$
will result in error that scales as $O(k^{\frac 3 2})$. 
We next show that we can reduce the error to $O(k)$, matching the bound one would get in the central model.
We recall the definition of additively homomorphic encryption scheme. Several popular public-key encryption schemes or standard variants of those, including Paillier, RSA and El Gamal, and lattice-based schemes, satisfy these properties.
\begin{definition}
\label{def:phe}
    A public key encryption scheme as defined in \cref{def:pke} is partially homomorphic over a group $(G,*)$ if there is a p.p.t. algorithm $\penc(\cdot,\cdot, k_{pub})$ with the property that for any pair of valid ciphertexts $c_1,c_2$, $\dec(\penc(c_1, c_2, k_{pub})) = \dec(c_1, k_{priv}) +  \dec(c_2, k_{priv})$; and a p.p.t. algorithm $\Phi_*$ that takes a group element and a ciphertext, and outputs another ciphertext with the property that for any $c$ and any $\alpha \in G$, $\dec(\Phi_{*}(\alpha, c, k_{pub}), k_{priv}) = \alpha * \dec(c, k_{priv})$.
\end{definition}
We will omit $k_{pub}$ as an argument from $\penc$ and $\Phi_{*}$ in the rest of the section and write $\penc(c_1,c_2,\ldots,c_k)$ to mean $\penc(c_1,\penc(c_2,\ldots,\penc(c_{k-1},c_k)\ldots))$.

Given an additively homomorphic encryption scheme, we will build on \cref{alg:histograms} to design a locally pan-private algorithm for means. Recall that the histogram algorithm already gives us encryptions of indicators of $|x|_1=j$ for each $j \in \{0,1,\ldots,k\}$. We will only need to redefine the \textbf{Send to Server} subroutine.
Building on these, the client computes an encryption of the number of occurrences by multiplying each encrypted bit by the number of occurrences and summing together. Since only one of the coordinates is an encryption of 1, the sum is the number of occurrences. The client can then privatize by adding discrete Gaussian noise~\cite{CanonneKS22, KairouzLS21} to the encrypted sum before sending to the server. 

\begin{algorithm}[H]
\caption{Averaging, Client Algorithm}\label{alg:average}
    \begin{algorithmic}[1]
    \Require $k, T, \enc, \penc, \Phi_{*}, \x, \sigma^2$

    \State \underline{\textbf{Send to Server}}
    \For{i = 0,1,\ldots,k}
        \State $s_i=\Phi_*(i, c_i)$
    \EndFor
    \State $r\sim\mathcal{N}_{\mathbb{Z}}(0, k\sigma^2)$
    \State $s_{k+1}=\enc(r)$

    \State \textbf{Return} $\penc(s_0, \cdots, s_{k+1})$
    \end{algorithmic}
\end{algorithm}

Upon receiving the reports from each client, the server simply decrypts and takes the average of the noisy reports. For an appropriate choice of $\sigma$ is easy to show the following.

\begin{theorem}\label{thm:ub_average}
       There is a computationally $0$-locally pan-private for estimating the sum, which is $(\eps_0,\delta_0)$-local DP with respect to the server. It estimates the sum with expected error $O\left(k\sqrt{n\log \tfrac 1 \delta_0}/\eps_0\right)$. Moreover, for any $(\eps,\delta) \in (0,1)$, there is an $\sigma$ such that the mechanism is $(\eps,\delta)$-aggregator DP, and has expected error $O(k\sqrt{\log\tfrac 1 \delta}/\eps)$.
\end{theorem}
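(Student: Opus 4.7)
My approach splits into three parts: computational local pan-privacy, DP with respect to the server (both local and aggregator), and utility.

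\emph{Pan-privacy.} The state-update portion of Algorithm~\ref{alg:average} is identical to that of Algorithm~\ref{alg:histograms}; only the \textbf{Send to Server} routine differs, and it simply produces one new ciphertext that overwrites the previously visible state. Since an intruder only ever observes the sequence of on-device states, the pan-privacy argument from Theorem~\ref{thm:ub_histogram} carries over. Concretely, I would run a hybrid argument across the $2(k{+}1)$ ciphertext slots and the $T$ time steps, replacing each ciphertext slot at each step $t$ by $\rerandomize^{t}(\enc(0))$ one at a time and invoking Definition~\ref{def:rerandomizable} (and its implication that $\rerandomize^t(\enc(m))$ and $\rerandomize^t(\enc(m'))$ are computationally indistinguishable for any $m,m'$) at each hybrid step. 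The cumulative advantage is $\operatorname{poly}(k,T) \cdot \negl(\lambda) = \negl(\lambda)$, yielding computational $0$-local pan-privacy.

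\emph{Privacy with respect to the server.} Because $\penc$ is additively homomorphic and $\Phi_{*}$ realizes scalar multiplication, the single ciphertext returned by the client decrypts to
\[
M(\x) \;=\; \sum_{i=0}^{k-1} i\cdot\mathbf{1}(|\x|_1=i) \;+\; k\cdot\mathbf{1}(|\x|_1 \geq k) \;+\; r \;=\; \min(|\x|_1,k) + r,
\]
where $r\sim \mathcal{N}_{\mathbb{Z}}(0,k\sigma^2)$ and exactly one of the histogram indicators equals $1$ after the state-update phase. The map $\x \mapsto \min(|\x|_1,k)$ has sensitivity at most $k$ over input streams, so by the discrete Gaussian analysis of~\citet{CanonneKS22} choosing $\sigma = \Theta(\sqrt{k\log(1/\delta_0)}/\eps_0)$ gives $(\eps_0,\delta_0)$-local DP. In the aggregator model, the aggregated noise is a sum of $n$ independent discrete Gaussians with total variance $nk\sigma^2$; with per-user sensitivity still $k$, taking $\sigma = \Theta(\sqrt{k\log(1/\delta)/n}/\eps)$ yields $(\eps,\delta)$-aggregator DP.

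\emph{Utility and the main obstacle.} The noise is mean-zero, so summing the decrypted messages produces an unbiased estimate of $\sum_j \min(|\x^{(j)}|_1,k)$, which equals $\sum_j |\x^{(j)}|_1$ under the paper's assumption that no per-device count exceeds $k$. The variance of the aggregate noise is $nk\sigma^2$, giving expected error $O(\sigma\sqrt{nk})$; substituting the two choices of $\sigma$ above yields $O(k\sqrt{n\log(1/\delta_0)}/\eps_0)$ and $O(k\sqrt{\log(1/\delta)}/\eps)$ respectively. The only delicate step I anticipate is bookkeeping in the pan-privacy hybrid so that each $\rerandomize$ invocation at time $t$ acts on a ciphertext with exactly the right number of prior rerandomizations to invoke Definition~\ref{def:rerandomizable}; the DP and utility analyses themselves are routine applications of the discrete Gaussian mechanism.
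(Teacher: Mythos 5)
Your proof is correct and supplies exactly what the paper leaves implicit (the paper only asserts the result ``is easy to show'' for an appropriate choice of $\sigma$); the decomposition into the histogram pan-privacy hybrid, a sensitivity-$k$ discrete Gaussian analysis, and aggregate-noise variance accounting is the intended one, and your derived choices of $\sigma$ recover the stated error bounds in both the local and aggregator regimes. Two small notes. First, Algorithm~\ref{alg:average} as printed applies $\Phi_*$ to $c_0,\ldots,c_k$, but consistent with the histogram section (which reports $(c_0,\ldots,c_{k-1},d_k)$) and with your own expression $M(\x)=\min(|\x|_1,k)+r$, the last slot should be $d_k$, encrypting $\one(|\x|_1\geq k)$, rather than $c_k$; you implicitly read it this way, which is the correct reading. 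Second, for full rigor in the aggregator bound one should invoke the analysis of sums of independent discrete Gaussians from \citet{KairouzLS21} rather than treating the aggregate noise as an exact Gaussian, though this does not change the asymptotics. It is also worth stating explicitly, as part of the pan-privacy step, that the final transmitted ciphertext is a PPT function of $s_T$ and fresh public-key randomness, so computational indistinguishability of the state vectors carries over to the joint distribution including the outgoing message.
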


\section{The Two Server Model}
\label{sec:ub-two-server}
In this section we will address the two-server model of~\cite{Corrigan-GibbsB17}, where the client sends secret-shares of their data to two servers. Any sensitive information stored locally will be secret-shared with the shares encrypted by the public keys of two separate servers at all times. This scheme is protected against continuous intrusion by an adversary that does not collude with either of the servers.

In the two-server model, we often want to additionally protect the server against malicious clients who seek to poison the result by sending secret-shares of invalid reports. For example, instead of sending a secret-sharing of 0 or 1, a malicious device may send a secret-sharing of a million to skew the result. We will build on zero-knowledge proofs of validity for the predicates of interest. Our construction will maintain encrypted secret-shares of the state. The additional complexity arising from these proofs is that to construct the proof of validity, one needs to know the secret-shared data. To address this, we make use of an important fact. The proofs in Prio depend on the input being secret-shared (as they must), but conditioned on the input, do not depend on the encryptions of the secret-shares. This allows the proofs to remain valid when we rerandomize the encryptions. Thus our algorithms will generate proofs of validity when the relevant secret-shares are created. One can rerandomize the encryptions of the secret-shares and the proofs, without impacting the validity of the proofs.

\subsection{Counting Devices that have at least one occurrence}
We first consider the problem of counting the number of devices that have at least one occurrence in the two-server model. The on-device and server-side algorithms will both be very similar to the single-server setting but with the addition of secret-sharing of any sensitive information on device, and creating validity proofs. We assume access to function \secretshare\ that generates a random secret-sharing, and a function \validityproof\ that creates a non-interactive zero-knowledge proof that the secret-shares represent a value in $\{0,1\}$. The pseudocode for the algorithm is in \cref{localpanprivcounting2s}.

We remark that this algorithm can resist an intrusion by an adversary that does not collude with either of the two servers. This can be relaxed to allow the adversary to collude with one of the two servers. The place where the collusion with a server can create a challenge is that a colluding server can help the adversary detect whether its share is changed in step \ref{step:newshare} or if only the encryption is rerandomized and the share itself is unchanged (step~\ref{step:rerandomize}). If we use an additively homomorphic encryption scheme (see \cref{def:phe}), we can create new secret-shares under the encryption instead, so that each $c^{(i)}$ is an encryption of a fresh random field element after each update. This allows for non-interactive proofs, including the SNIPs \citep{Corrigan-GibbsB17} that use Beaver triples.
 \algrenewcommand\algorithmicindent{1.0em}%
 \begin{figure}[!h]
\begin{algorithm}[H]
\caption{\countnonzero, Client Algorithm, Two-server model}\label{localpanprivcounting2s}
    \begin{algorithmic}[1]
    \Require $T, \enci{1}, \rerandomizei{1}, \enci{2}, \rerandomizei{2}, \x=x_1,\ldots,x_T$
    \State \underline{\textbf{Initialization}}
    \State $(\si{1},\si{2})=\secretshare(0)$
    \State $(\pi^{(1)}, \pi^{(2)}) = \validityproof(\si{1},\si{2})$
    \State $(\ci{1}, \ci{2}) = (\enci{1}(\si{1}), \enci{2}(\si{2}))$
    \State $(\pri{1},\pri{2}) = (\enci{1}(\pi^{(1)}),  \enci{2}(\pi^{(2)}))$
    
    \vspace{0.1in}

    \State \underline{\textbf{State Update}}
    \For{$t=1:T$}
        \If{$x_t = 1$}
            \State $(\si{1},\si{2})=\secretshare(1)$ \label{step:newshare}
  \State $(\pi^{(1)}, \pi^{(2)}) = \validityproof(\si{1},\si{2})$
    \State $(\ci{1}, \ci{2}) = (\rerandomizei{1}^t(\enci{1}(\si{1})), \rerandomizei{2}^t(\enci{2}(\si{2})))$
    \State $(\pri{1},\pri{2}) = (\rerandomizei{1}^t(\enci{1}(\pi^{(1)})),  \rerandomizei{2}^t(\enci{2}(\pi^{(2)})))$    
        \Else
    \State $(\ci{1}, \ci{2}) = (\rerandomizei{1}(\ci{1}), \rerandomizei{2}(\ci{2}))$   \label{step:rerandomize} 
    \State $(\pri{1}, \pri{2}) = (\rerandomizei{1}(\pri{1}), \rerandomizei{2}(\pri{2}))$
    \EndIf
    \EndFor
    \vspace{0.1in}

    \State \underline{\textbf{Send to Server}}
    \State $r={\rm Ber}(\frac{e^{\epsilon_0}-1}{e^{\epsilon_0}+1})$
    \If{$r=0$}
        \State $r'={\rm Ber}(1/2)$
        \If{$r'=1$}
    \State $(\si{1},\si{2})=\secretshare(0)$
  \State $(\pi^{(1)}, \pi^{(2)}) = \validityproof(\si{1},\si{2})$
    \State $(\ci{1}, \ci{2}) = (\rerandomizei{1}^t(\enci{1}(\si{1})), \rerandomizei{2}^t(\enci{2}(\si{2})))$
    \State $(\pri{1},\pri{2}) = (\rerandomizei{1}^t(\enci{1}(\pi^{(1)})),  \rerandomizei{2}^t(\enci{2}(\pi^{(2)})))$    
        \Else
    \State $(\si{1},\si{2})=\secretshare(1)$
      \State $(\pi^{(1)}, \pi^{(2)}) = \validityproof(\si{1},\si{2})$
    \State $(\ci{1}, \ci{2}) = (\rerandomizei{1}^{T}(\enci{1}(\si{1})), \rerandomizei{2}^T(\enci{2}(\si{2})))$
    \State $(\pri{1},\pri{2}) = (\rerandomizei{1}^{T}(\enci{1}(\pi^{(1)})),  \rerandomizei{2}^T(\enci{2}(\pi^{(2)})))$    
        \EndIf
      \EndIf
    \State $(\ci{1}, \ci{2}) = (\rerandomizei{1}(\ci{1}), \rerandomizei{2}(\ci{2}))$
    \State $(\pri{1}, \pri{2}) = (\rerandomizei{1}(\pri{1}), \rerandomizei{2}(\pri{2}))$
  
    \State \textbf{Return} $(\ci{1}, \pri{1}, \ci{2}, \pri{2})$
    \end{algorithmic}
\end{algorithm}
\end{figure}
\algrenewcommand\algorithmicindent{2.0em}%

Each client sends the secret-shares of their contribution to the two servers. Each server can then decrypt the shares with their private key and aggregate the result. The sum of the shares at the two servers can then be de-biased to give the final result. The servers can also validate the proofs to ensure that the secret-shared values are all in $\{0,1\}$. 

\subsection{Computing Histograms of the Number of Occurrences}

Using the same approach as we did for \countnonzero, we can extend our algorithm for histograms in the single-server setting to work in the two-server model as well. As with the \countnonzero\ algorithm, the validity property that is being validated is that each of the $k+1$ reported values are in $\{0,1\}$. Thus the same approach to constructing validity proofs suffices. For brevity, we omit a detailed description of the algorithm.

\section{On the need for rerandomizable Public-key cryptography}
\label{sec:pkeneeded}
We show that a rerandomizable public-key encryption scheme is necessary for computational $0$-local pan-privacy. Recall that any locally pan-private algorithm also has an $\initialize$ operation, that creates some shared state between the client and the server(s). In our implementations, this operation would be one that provides the public keys to the clients, and creates the initial state $s_0$. 
\begin{theorem}
\label{thm:pkeneeded}
    Suppose that we have a set of algorithms $\initialize, \{\Statefn_t\}_{t=1}^T, \Out$, and $\estimate$ that define a computationally $0$-locally pan-private streaming algorithm that estimates \countnonzero\ on any set of inputs with error at most $n/4$, with probability $1-\negl(n)$ for large enough $n$. Then for any security parameter $\lambda$ and given a $T=\poly(\lambda)$, we can define functions $\KeyGen, \enc, \dec$ that define a public-key encryption scheme, where each of these operations runs in $\poly(\lambda)$ time. The scheme is rerandomizable in the sense of~\cref{def:rerandomizable}, supporting up to $T$ rerandomizations.
\end{theorem}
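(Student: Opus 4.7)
The plan is to turn the given pan-private algorithm into a PKE by simulating $n = \poly(\lambda)$ independent virtual clients and using their internal states as the ciphertext; choose $n$ large enough that the algorithm's error is below $n/4$ with probability $1-\negl(\lambda)$. $\KeyGen(1^{\lambda})$ runs $\initialize$; whatever server-side data is needed to evaluate $\Out$ and $\agg$ becomes $k_{priv}$, while the initial client state $s_0$ together with any public parameters forms $k_{pub}$. To encrypt a bit $b \in \{0,1\}$, $\enc(b,k_{pub})$ spawns $n$ independent clients, applies $\Statefn_1(b;s_0)$ to each, and outputs the tuple $(s_1^{(1)},\ldots,s_1^{(n)},\tau{=}1)$. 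The rerandomization $\rerandomize(c,k_{pub})$ reads the counter $\tau$, applies $\Statefn_{\tau+1}(0;\cdot)$ to each client's state, and increments the counter; this supports up to $T-1$ rerandomizations. Decryption $\dec(c,k_{priv})$ pads each client's state to $s_T^{(i)}$ by running the remaining $\Statefn_t(0;\cdot)$ updates, applies $\Out$, aggregates via $\agg$, and outputs $1$ if the estimate exceeds $n/2$ and $0$ otherwise.

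\textbf{Correctness and semantic security.} After any $t\leq T-1$ rerandomizations of $\enc(b)$, the $n$ client states are jointly distributed exactly as $n$ independent runs of the given pan-private algorithm on the length-$T$ stream $\x_b=(b,0,0,\ldots,0)$. The true $\countnonzero$ value on these streams is $nb\in\{0,n\}$, so by the accuracy hypothesis thresholding at $n/2$ correctly recovers $b$ with probability $1-\negl(\lambda)$. Semantic security is immediate from computational local pan-privacy applied to a single client -- which gives that the length-$1$ state sequence under $\x_0$ is computationally indistinguishable from that under $\x_1$ -- combined with a standard hybrid over the $n$ independent clients.

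\textbf{Rerandomizability and main obstacle.} We must show that $(c,\bar c)$ is computationally indistinguishable from $(c,\bar{c'})$, where $(c,\bar c)$ is, per client, the pair of consecutive states $(s_{1+t},s_{2+t})$ drawn from a single simulated $\x_m$-trajectory, while $(c,\bar{c'})$ takes the second coordinate from a fresh, independent $\x_{m'}$-trajectory. My plan is a three-step hybrid per client: (i) replace the correlated pair under $\x_m$ by the correlated pair under $\x_{m'}$ using computational pan-privacy on the length-$(2+t)$ trajectory; (ii) decouple the correlated $\x_{m'}$-pair into a pair drawn from two independent $\x_{m'}$-trajectories; (iii) use pan-privacy on a single-time-step marginal to swap the first coordinate back to $\x_m$. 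Step (ii) is the principal obstacle, since pan-privacy as stated compares trajectories of different streams rather than different runs of the same stream. My plan for that step is to exploit the accuracy hypothesis: if the conditional distribution of $s_{2+t}$ given $s_{1+t}$ were computationally distinguishable from the marginal distribution of $s_{2+t}$ (both under $\x_{m'}$), then coupling such a distinguisher with the server-side decoder $\agg\circ\Out$ -- which by accuracy extracts $\one(\x)$ from $s_T$ -- would yield an attack that separates trajectories of $\x_0$ from those of $\x_1$, contradicting pan-privacy. A further hybrid over the $n$ clients lifts the per-client argument to the full ciphertext, and the decryption equality $\dec(c,k_{priv})=\dec(\bar c,k_{priv})$ follows from the fact that both $c$ and $\bar c$ arise along the same $\x_m$-trajectory, whose $\one$-value is unchanged by appending zeros.
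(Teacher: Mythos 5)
Your construction tracks the paper's exactly: the same $\KeyGen$ (run $\initialize$ $n$ times, client states form $k_{pub}$, server state forms $k_{priv}$), the same $\enc$ (simulate one state-update on input $b$ for each virtual client and record a step counter), the same $\dec$ (pad with zero-updates to step $T$, apply $\Out$ then $\estimate$, threshold at $n/2$), the same $\rerandomize$ (one more zero-update, increment the counter), the same correctness argument via exact simulation, and the same semantic-security argument (pan-privacy at step one plus a hybrid over the $n$ simulated clients).

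Where you go further is in the treatment of rerandomizability, and you have correctly identified a real subtlety: Definition~\ref{def:rerandomizable} requires the \emph{joint} distribution $(c,\bar c)$---with $\bar c$ derived from $c$ itself---to be indistinguishable from $(c,\bar{c'})$ with $\bar{c'}$ coming from an independent run. In the simulated-trajectory construction the first pair is a correlated pair of consecutive states along one trajectory, while the second has its two coordinates drawn from two independent trajectories; the paper's own argument only invokes indistinguishability of iterated rerandomizations of $0$ versus $1$, which is a statement about marginals and does not by itself settle the joint requirement. Your three-step hybrid decomposition of the joint requirement is well-posed, and steps (i) and (iii) do follow from pan-privacy.

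Step (ii) is where the argument breaks, and the fix you propose is not sound. You argue that if the conditional law of $s_{2+t}$ given $s_{1+t}$ were distinguishable from the marginal of $s_{2+t}$, one could couple such a distinguisher with $\estimate\circ\Out$ to separate $\x_0$-trajectories from $\x_1$-trajectories and contradict pan-privacy. This implication is false. Consider a pan-private algorithm that, at $\initialize$, samples a uniformly random nonce $r$ and carries $r$ verbatim inside every state $s_t$. This is perfectly compatible with $0$-local pan-privacy (the nonce is input-independent, so $\textbf{State}(\x)$ and $\textbf{State}(\x')$ remain indistinguishable) and with the $n/4$-accuracy hypothesis. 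Yet the conditional and marginal of $s_{2+t}$ are trivially distinguishable: just check whether the nonce in the candidate $s_{2+t}$ matches the nonce in the given $s_{1+t}$. Crucially, this nonce-matching test behaves identically on $\x_0$- and $\x_1$-trajectories, so it gives zero advantage against pan-privacy; your proposed reduction therefore yields nothing. The decoupling you need in step (ii) does not follow from pan-privacy plus accuracy, and closing this gap would require either an additional structural hypothesis on the streaming algorithm (ruling out carried-through input-independent state), a weakening of Definition~\ref{def:rerandomizable} to the marginal form actually established, or a different construction.
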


\begin{proof}
We will build an encryption scheme that can encrypt a single bit $b$. The basic idea of the construction is to simulate running the algorithm on $n$ clients, where for each client $i$, the first input in the stream $x^{(i)}_1$ is set to $b$. The internal state of all the clients will comprise the encryption. The decryption will simulate the state transformations given $x^{(i)}_t = 0$ all the way to $T$, generating the outputs, and running $\estimate$ on the outputs. The utility guarantee implies that the decryption recovers the original intended bit. The local pan-privacy implies the security of the scheme. Finally rerandomization is achieved by simulating a single state transformation. We give details next.

For a suitably large $n=\poly(\lambda)$, $\KeyGen$ will run the $\initialize$ operation $n$ times, and return the set of $n$ client states, including the state $s_0^{(i)}$ for each client, as a public key $k_{pub}$, and the $n$ server states as private key $k_{priv}$. Additionally, the parameter $T$ will be included in both the keys. Given a bit $b$, $\enc(b, k_{pub})$ will simulate for each $i$, one step of the locally pan-private algorithm on input $b$ starting at state $s_0^{(i)}$ to derive a set of states $s^{(i)}_1$. The encryption will be defined as $(1, \{s^{(i)}_1\}_{i=1}^n)$. In general, valid ciphertexts will look like $(t,\{s^{(i)}_t\}_{i=1}^n)$, for some $t \leq T$, and some set of states  $\{s^{(i)}_t\}_{i=1}^n$. Given such an encryption, the $\dec$ algorithm will simulate, for each $i$, running the state transformations $\Statefn_{t+1},\ldots,\Statefn_{T}$ on $s^{(i)}_{t}$ with input $x^{(i)}_{t+1}=0$, followed by running $\Out$ to generate a set of $n$ messages. It then runs $\estimate$ on the collection of $n$ messages, and returns $0$ if the answer is smaller than $n/2$, and $1$ otherwise. Finally, $\rerandomize$ on a ciphertext $c=(t,\{s^{(i)}_t\}_{i=1}^n)$ (for $t<T$) will simulate, for each $i$, running the state transformations $\Statefn_{t+1}$ on  $s^{(i)}_{t}$ with input $x^{(i)}_{t+1}=0$ to get an updated state $s^{(i)}_{t+1}$. The updated ciphertext $c'$ is then set to $(t+1, \{s^{(i)}_{t+1}\}_{i=1}^n)$. Any input ciphertexts with $t\geq T$ are rejected by $\rerandomize$.

It is easy to see that $\dec(\enc(b, k_{pub}), k_{priv})$ comprises an exact simulation of running the \countnonzero\ algorithm on $n$ clients on inputs $(b,0,0,\ldots,0)$, followed by rounding the final estimate. Since the correct answer to \countnonzero\ on these inputs is $bn$ and the estimation algorithm has error $n/4$ with high probability, we conclude that the encryption scheme satisfies $\dec(\enc(b, k_{pub}), k_{priv}) = b$ except with negligible probability.

The security of the encryption  follows from the computational local pan-privacy. Indeed if we had an efficient algorithm $A$ that can distinguish $\enc(0, k_{pub})$ and $\enc(1, k_{pub})$, then by a standard hybrid argument, it can be used to distinguish $\Statefn(0, s_0)$ and $\Statefn(1,s_0)$, which violates the computational $0$-local pan-privacy.

The rerandomization is essentially simulating one step in the decoding. It is therefore immediate that for any valid ciphertext, $\dec(\rerandomize(c, k_{pub}), k_{priv})$ and $\dec(c, k_{priv})$ are running exactly the same simulation and thus are identically distributed. Finally, the security of the rerandomization follows once again from the hybrid argument: if we could distinguish iterated rerandomizations of $0$ from those of $1$, we would get a distinguisher that can learn the first input $x_1$ from the local state at some later time step $t$.
\end{proof}

We note that the proof uses instances which have either zero or one `$1$' in each input stream. Thus any accurate algorithm for histogram estimation implies one for \countnonzero\ with the same accuracy, and hence would also imply a public-key encryption scheme.

We next extend this argument to handle a large class of $\eps$-locally pan-private algorithms for $\eps>0$. This result will apply to locally pan-private algorithms that use at most logarithmic space, and which use the same state transformation function $\Statefn$ at all time steps. 

\begin{theorem}
\label{thm:pkeneededv2}
Let $\eps \in (0,1)$.
    Suppose that we have a set of algorithms $\initialize, \Statefn, \Out$, and $\estimate$ that define a computationally $\eps$-locally pan-private streaming algorithm that estimates \countnonzero\ on any set of inputs with error at most $n/4$, with probability $1-\negl(n)$ for large enough $n$, and for up to $T$ steps, using $S$ bits of space. Then for any security parameter $\lambda$, we can define functions $\KeyGen, \enc, \dec$ that define a $(\frac{\eps}{\sqrt{T}} + T\cdot \negl(\lambda))$-secure public-key encryption scheme, where each of these operations runs in $\poly(\lambda,2^S)$ time. The scheme  is rerandomizable in the sense of~\cref{def:rerandomizable}, supporting up to $T$ rerandomizations.
\end{theorem}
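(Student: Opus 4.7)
My plan is to adapt the construction from the proof of Theorem~\ref{thm:pkeneeded}, replacing the ``place the bit $b$ at step $1$'' encoding by the random-placement encoding used in the lower bound of Section~\ref{sec:lb}, so that each simulated client's state leaks only $O(\epsilon/\sqrt T)$ information about $b$ rather than the $\epsilon$ allowed by a single-step pan-privacy guarantee.

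\emph{Construction.} $\KeyGen(1^\lambda)$ invokes $\initialize$ for $n$ clients (with $n$ a constant chosen large enough for the utility hypothesis to apply), storing $\{s_0^{(i)}\}_i$ and $T$ in the public key and the server state in the private key. $\enc(b,k_{pub})$ samples an independent $\tilde t_i\in[T/2]$ for each client and simulates $\Statefn$ for $T/2$ steps from $s_0^{(i)}$ on the stream that places $b$ at position $\tilde t_i$ and $0$ elsewhere, producing $s_{T/2}^{(i)}$; the ciphertext is $c=(T/2,\{s_{T/2}^{(i)}\})$. $\rerandomize$ advances each state by one zero-input application of $\Statefn$ and increments the counter, supporting $\Theta(T)$ rerandomizations. $\dec$ continues the simulation with zeros through step $T$, applies $\Out$ to each final state, feeds the outputs to $\estimate$, and thresholds at $n/2$.

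\emph{Correctness and rerandomization.} By construction every simulated client has $\one(\x^{(i)})=b$, and this is preserved as zero inputs are appended, so by the utility hypothesis $\estimate$ returns a value within $n/4$ of $bn$ with the probability guaranteed by the hypothesis on $n$, and thresholding recovers $b$. The indistinguishability of rerandomized ciphertexts required by Definition~\ref{def:rerandomizable} follows from the security bound below, applied to state trajectories after absorbing their common zero-input suffixes into $\Statefn$'s future transitions.

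\emph{Security and the main obstacle.} For a single simulated client, the ciphertexts for $b=0$ and $b=1$ are exactly the state distributions $\distinguisher(0)$ and $\distinguisher(1)$ of Section~\ref{sec:lb}; if $\Statefn$ were information-theoretically $\epsilon$-DP, Lemma~\ref{lem:tv_small} would give total variation $O(\epsilon/\sqrt T)$ directly. To upgrade from the computational pan-privacy hypothesis I plan a hybrid over the $T$ state transitions: at step $t$, replace $\Statefn(\cdot;s_{t-1})$ by the closest information-theoretic $\epsilon$-DP transition in the finite-dimensional simplex of stochastic maps $\{0,1\}\times\{0,1\}^S\to\{0,1\}^S$. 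The nearest such transition can be computed offline in $\poly(2^S)$ time, fitting the PKE's operation budget, and any deviation larger than $\negl(\lambda)$ would yield an enumeration-based $\poly(\lambda,2^S)$-time distinguisher contradicting the computational pan-privacy hypothesis. Summing the $T$ substitutions gives the $T\cdot\negl(\lambda)$ slack; the modified algorithm is then information-theoretically $\epsilon$-DP at every step, Lemma~\ref{lem:tv_small}'s argument applies verbatim to give the $O(\epsilon/\sqrt T)$ term, and a final hybrid over the constant number of clients costs only a constant factor. The most delicate step, and the main obstacle, is precisely this computational-to-information-theoretic transition rounding: for general $S$, computational $\epsilon$-DP does not imply statistical $\epsilon$-DP up to $\negl(\lambda)$ TV, and making the hybrid rigorous requires leveraging both the bounded state space and the extended $\poly(\lambda,2^S)$ operation budget to argue that any larger statistical deviation would be witnessed by a Bayes-optimal enumeration-based distinguisher.
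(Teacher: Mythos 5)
Your construction of the encryption scheme matches the paper's exactly: sample a random placement $\tilde t_i\in[T/2]$ for each simulated client, run $\Statefn$ for $T/2$ steps, rerandomize by advancing one zero-input step, and decrypt by continuing to step $T$, applying $\Out$ and $\estimate$, and thresholding at $n/2$. Correctness and rerandomization are argued the same way.

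Where you diverge is the security argument. The paper does not perform a per-step ``round each transition to the nearest information-theoretic $\eps$-DP map'' hybrid. Instead it invokes, as a black box, the IND-CDP $\leftrightarrow$ SIM-CDP equivalence of Mironov et al.~\citep{MironovPRV09}: computational $\eps$-indistinguishability of the state vectors implies the existence of shadow distributions that are \emph{statistically} $\eps$-indistinguishable and computationally close to the real ones, so Lemma~\ref{lem:tv_small} applies to the shadows and the final bound follows by paying a $T\cdot\negl(\lambda)$ hybrid loss to move between shadow and real distributions. Your route is more explicit and in principle more elementary, but you correctly flag the rounding step as the ``main obstacle'' and do not close it: computational $\eps$-indistinguishability of the full state vector does not straightforwardly yield per-step statistical $\eps$-DP after conditioning on a particular reached state $s_{t-1}$ (conditioning on low-probability states can destroy computational guarantees), and your enumeration-based distinguisher would need to be paired with a careful accounting of which states are reached with non-negligible probability. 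As written, your proposal identifies a genuine route but leaves precisely the step that requires a citation or a real argument.

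The more substantial omission is the computational-efficiency claim. The theorem asserts that $\KeyGen,\enc,\dec,\rerandomize$ run in $\poly(\lambda, 2^S)$ time, and the note following it makes clear that the interesting regime is $T$ superpolynomial in $\lambda$ (since one wants $T^{-1}=\negl(\lambda)$). In that regime, your $\enc$ and $\dec$ as described—``simulate $\Statefn$ for $T/2$ steps''—are not polynomial time. The paper handles this by observing that, because the theorem hypothesis requires the \emph{same} transition $\Statefn$ at every step, the $\tau$-fold zero-input evolution is multiplication by $M^{\tau}$ for a fixed $2^S\times 2^S$ stochastic matrix $M$, computable in $\poly(2^S)\cdot\log\tau$ time by repeated squaring. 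This is the reason both for the $2^S$ dependence in the runtime and for the time-invariance restriction on $\Statefn$, neither of which your proposal exploits or even mentions.
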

Note that if $T^{-1}$ is $\negl(\lambda)$ and $S$ is $O(\log \lambda)$, then the resulting public-key encryption scheme is secure. 
\begin{proof}
We will use an encryption scheme very similar to the one in the proof above. For encryption, we will set $x_t$ to one for a random $t \in \{1,2,\ldots,T/2\}$, instead of setting $x_1$ to one  in the proof of \cref{thm:pkeneeded}. The encryption algorithm simulates the state transformation up to $T/2$ steps. \cref{lem:tv_small} then allows us to argue the security of the encryption scheme. The decryption simulates the state transforms up to step $T$ and the correctness proof is as before. The additional challenge is to argue that $\enc,\dec$ run in polynomial time. This uses the fact that the result of applying the same randomized transform $\tau$ times, on a state of size $S$  can be computed in time $\exp(O(S))\cdot\log(\tau)$. We give details next.

As before, $\KeyGen$ will run the $\initialize$ operation $n$ times, and return the set of $n$ client states, including the state $s_0^{(i)}$ for each client, as a public key $k_{pub}$, and the $n$ server states as private key $k_{priv}$. Additionally, the parameter $T$ will be included in both the keys. To define $\enc$, we consider the following randomized process. Given a bit $b$, we first create $n$ independent random streams which are zero everywhere, except for a randomly picked $t_i \in \{1,\ldots,T/2\}$, where $\x^{(i)}$ is $b$. We then simulate, for each $i$, $T/2$ steps  of the locally pan-private algorithm on $\x^{(i)}$ starting at state $s_0^{(i)}$ to get $s_{T/2}^{(i)}$. $\enc(b, k_{pub})$ is then defined as $(T/2, \{s^{(i)}_{T/2}\}_{i=1}^n)$. In general, valid ciphertexts will look like $(t,\{s^{(i)}_t\}_{i=1}^n)$, for some $t \in [T/2, T]$, and some set of states  $\{s^{(i)}_t\}_{i=1}^n$. Given such an encryption, the $\dec$ algorithm will simulate, for each $i$, running the state transformations $\Statefn$ $(T-t)$ times on $s^{(i)}_{t}$ with input $x^{(i)}_{t+1}=0$, followed by running $\Out$ to generate a set of $n$ messages. It then runs $\estimate$ on the collection of $n$ messages, and returns $0$ if the answer is smaller than $n/2$, and $1$ otherwise. As before, $\rerandomize$ on a ciphertext $c=(t,\{s^{(i)}_t\}_{i=1}^n)$ (for $t<T$) will simulate, for each $i$, running the state transformations $\Statefn$ on  $s^{(i)}_{t}$ with input $x^{(i)}_{t+1}=0$ to get an updated state $s^{(i)}_{t+1}$. The updated ciphertext $c'$ is then set to $(t+1, \{s^{(i)}_{t+1}\}_{i=1}^n)$. 

As before, $\dec(\enc(b, k_{pub}), k_{priv})$ comprises an exact simulation of running the \countnonzero\ algorithm on $n$ clients, where the input streams  satisfy $\one(\x^{(i)})=b$, followed by rounding the final estimate. Since the correct answer to \countnonzero\ on these inputs is $bn$ and the estimation algorithm has error $n/4$ with high probability, we conclude that the encryption scheme satisfies $\dec(\enc(b, k_{pub}), k_{priv}) = b$ except with negligible probability.

Now \cref{lem:tv_small} implies that if the streaming algorithm was information-theoretically $\eps$-locally pan-private, then the $TV$ distance between the distributions $\enc(0,k_{pub})$ and $\enc(1,k_{pub})$ would be $O(\eps/\sqrt{T})$. \citet{MironovPRV09} show that computational $(\eps,\delta)$-indistinguishability (\cref{def:comp_dp}) is equivalent the existence of distributions $D_0$ and $D_1$  such that $D_0$ and $D_1$ are $O(\eps/\sqrt{T})$-indistinguishable, and  $D_b$ and $\enc(b, k_{pub})$ are computationally indistinguishable. 
It follows that $\enc(0, k_{pub})$ are $\enc(1, k_{pub})$ are computationally $(0,O(\eps/\sqrt{T} + T\cdot \negl(\lambda)))$-indistinguishable. 

The rerandomization, as before, is essentially simulating one step in the decoding. It is therefore immediate that for any valid ciphertext, $\dec(\rerandomize(c, k_{pub}), k_{priv})$ and $\dec(c, k_{priv})$ are running exactly the same simulation and thus are identically distributed. Finally, the security of the rerandomization follows once again from the hybrid argument: if we could distinguish iterated rerandomizations of $0$ from those of $1$, we would get a distinguisher that can learn the $b$ from the local state at some later time step $t$.

Finally, we argue computational efficiency. The $\enc$ and $\dec$ algorithms need to simulate applying $\Statefn^\tau$ on input $0^\tau$, for a suitable $\tau$, at most $2n$ times each. For a state space of size $S$, the transformation is defined by a stochastic matrix $M$ of size $2^{S} \times 2^{S}$. Applying this Markov kernel defined by $M$ $\tau$ times is equivalent to applying $M^\tau$. Since computing $M^\tau$ can be done in time $poly(2^S)\cdot \log \tau$ by repeated squaring, and thus we get the claimed run time for the scheme.
\end{proof}

\section{Conclusions}
\label{sec:conclusions}
In this work, motivated by privacy concerns on shared devices, we introduce the notion of local pan-privacy. We show that while information-theoretic local pan-privacy may be too strong a requirement for basic telemetry tasks, computational versions of this definition can be achieved without sacrificing on utility. We present algorithms for the fundamental tasks of counting the number of devices where a sensitive event occurs, as well as histograms of event counts, both in the trusted server and the two-server models. Our algorithms use public-key encryption schemes, and we show that such schemes are necessary to achieve computational local pan-privacy.

Our work raises many natural question. Our lower bound in \cref{thm:lower_bound} relies on instances with at most a single $1$, and shows a $\sqrt{T}$ gap in the error. We conjecture that this can be strengthened to an $\Omega(T)$ gap when one allows the instances to have arbitrarily many $1$s. While we have given algorithms for the most common telemetry tasks, other telemetry tasks may raise additional challenges. The validity proofs in the more general setting (when an adversary can collude with one of the servers) in our approach need to be non-interactive. We leave open the question of designing efficient zero-knowledge proofs for other predicates, that are compatible with local pan-privacy.

\bibliography{refs}
\bibliographystyle{plainnat}
\appendix
\section{On the TV distance between Binomials}
\label{app:binomials_tv}
In this section, we will prove the following inequality.
\begin{theorem}
    Let $T \geq 2$ and $p \in (0,\frac 1 2)$. Then
    \begin{align*}
        TV(Bin(T, p), Bin(T-1, p) + Bern(1-p)) &= 2(1-2p)\frac{\ceil{Tp}}{Tp} \Pr[Bin(T,p) = \ceil{Tp}]\\
        &\leq (1-2p)/\sqrt{4p(1-p)T})
    \end{align*}
\end{theorem}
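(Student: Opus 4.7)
The plan is to reduce the TV distance to the mean absolute deviation of a binomial via a clean algebraic identity, apply the classical De Moivre formula for that deviation, and finally bound the resulting binomial pmf at its mode using Stirling. Let $X \sim \mathrm{Bin}(T,p)$ and $Y \sim \mathrm{Bin}(T-1,p) + \mathrm{Bern}(1-p)$.

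First I would get a closed form for $\Pr[Y=k] - \Pr[X=k]$. Conditioning on the $\mathrm{Bern}(1-p)$ summand,
\[
\Pr[Y=k] = (1-p)\Pr[\mathrm{Bin}(T-1,p)=k-1] + p\Pr[\mathrm{Bin}(T-1,p)=k].
\]
Using the identities $\binom{T-1}{k-1} = \tfrac{k}{T}\binom{T}{k}$ and $\binom{T-1}{k} = \tfrac{T-k}{T}\binom{T}{k}$, the ratio $\Pr[Y=k]/\Pr[X=k]$ simplifies to $\bigl[k(1-p)^2 + (T-k)p^2\bigr]/\bigl[Tp(1-p)\bigr]$. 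Subtracting $1$ and collecting terms yields the key identity
\[
\Pr[Y=k] - \Pr[X=k] = \frac{(1-2p)(k-Tp)}{Tp(1-p)}\,\Pr[X=k].
\]
Summing absolute values gives $\|f_Y - f_X\|_1 = \tfrac{1-2p}{Tp(1-p)}\,\mathbb{E}|X - Tp|$, reducing the problem to computing $\mathbb{E}|X - Tp|$.

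Next I would apply De Moivre's formula. Since $\mathbb{E}[X-Tp]=0$, we have $\mathbb{E}|X-Tp| = 2\sum_{k \geq m}(k-Tp)\Pr[X=k]$ where $m = \lceil Tp\rceil$. Using $k\binom{T}{k} = T\binom{T-1}{k-1}$, the partial moment $\sum_{k\geq m} k\Pr[X=k]$ equals $Tp\cdot \Pr[\mathrm{Bin}(T-1,p)\geq m-1]$. Combined with $\Pr[\mathrm{Bin}(T,p)\geq m] = p\Pr[\mathrm{Bin}(T-1,p)\geq m-1] + (1-p)\Pr[\mathrm{Bin}(T-1,p)\geq m]$, the difference telescopes to $(1-p)\Pr[\mathrm{Bin}(T-1,p)=m-1]$. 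Hence
\[
\mathbb{E}|X - Tp| = 2Tp(1-p)\Pr[\mathrm{Bin}(T-1,p)=m-1] = 2(1-p)\,m\,\Pr[X=m],
\]
where the last equality uses $m\binom{T}{m} = T\binom{T-1}{m-1}$. Plugging into the expression for $\|f_Y - f_X\|_1$ yields the claimed equality.

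For the upper bound, I would rewrite $\tfrac{\lceil Tp\rceil}{Tp}\Pr[X=\lceil Tp\rceil] = \Pr[\mathrm{Bin}(T-1,p)=\lceil Tp\rceil - 1]$ (same identity as above) and bound this binomial pmf by $1/\sqrt{4Tp(1-p)}$. This is the technical step I expect to be the main obstacle: a careful Stirling estimate gives $\max_k \Pr[\mathrm{Bin}(n,p)=k] \leq 1/\sqrt{2\pi n p(1-p)}$ up to lower order terms, and since $2\pi > 4$ there is slack to absorb the passage from $T-1$ to $T$ and to handle small values of $T \geq 2$ directly. One can avoid a fully quantitative local CLT by using the explicit Stirling inequalities $\sqrt{2\pi n}(n/e)^n \leq n! \leq \sqrt{2\pi n}(n/e)^n e^{1/(12n)}$, applying them to $\binom{T-1}{m-1}p^{m-1}(1-p)^{T-m}$ with $m-1 = \lceil Tp\rceil - 1$, and bounding the KL-divergence penalty $e^{-(T-1)D(q\|p)}$ for $q = (m-1)/(T-1)$ crudely by $1$ since $|q - p| \leq 1/(T-1)$. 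Substituting this bound into the equality completes the proof.
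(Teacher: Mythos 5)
Your proof is correct, and in two places it takes a genuinely different route from the paper. For the equality, the paper first couples $Bern(p)$ and $Bern(1-p)$ to extract the factor $(1-2p)$ and reduce to $TV(Bin(T-1,p),Bin(T-1,p)+1)$, then expands the $L^1$ distance; you instead compute the pointwise ratio $\Pr[Y=k]/\Pr[X=k]$ directly by conditioning on the $Bern(1-p)$ summand and applying the identities $\binom{T-1}{k-1}=\tfrac{k}{T}\binom{T}{k}$ and $\binom{T-1}{k}=\tfrac{T-k}{T}\binom{T}{k}$. Both routes land on $2\,TV=\tfrac{1-2p}{Tp(1-p)}\,\E|X-Tp|$ with $X\sim Bin(T,p)$. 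You then rederive De Moivre's mean-absolute-deviation formula via the telescoping partial-sum argument, whereas the paper simply cites it; your derivation is correct and self-contained, which is a small plus. The substantial divergence is the upper bound. The paper's proof is a one-line Jensen step, $\E|X-Tp|\le\sqrt{\E(X-Tp)^2}=\sqrt{Tp(1-p)}$, which immediately yields $(1-2p)/\sqrt{4Tp(1-p)}$. You instead propose a Stirling estimate for $\Pr[Bin(T-1,p)=\lceil Tp\rceil-1]$ near the mode. That can be made to work, but it is considerably more delicate than the sketch suggests: the slack you invoke (``$2\pi>4$'') requires $2\pi(T-1)q(1-q)\ge 4Tp(1-p)$ with $q=(\lceil Tp\rceil-1)/(T-1)$, and $q$ need not be close to $p$; in particular $q=0$ whenever $T=2$ and $p\le 1/2$, where the Stirling form degenerates and must be handled separately, and similar boundary care is needed whenever $\lceil Tp\rceil=1$. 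You flag this as the likely obstacle, which is fair, but the Jensen route sidesteps all of it and is the intended step.

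One more remark worth recording: your derivation yields $TV=\tfrac{(1-2p)\lceil Tp\rceil}{Tp}\Pr[Bin(T,p)=\lceil Tp\rceil]$, without the leading factor of $2$ that appears in the theorem statement as printed. Tracing the paper's own proof gives the same expression you obtained (the factor of $2$ from De Moivre cancels against the $2$ multiplying $TV$), and the stated upper bound $(1-2p)/\sqrt{4p(1-p)T}$ is only consistent with the $2$-free equality. So the factor of $2$ in the displayed equality is a typo in the paper; your version is the correct one.
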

\begin{proof}
Note that $Bin(T,p) =  Bin(T-1,p) + Bern(p)$. Since we can couple $Bern(p)$ and $Bern(1-p)$ so that they agree with probability $2p$, and differ by $1$ otherwise, it follows that
\begin{align*}
    TV(Bin(T, p), Bin(T-1, p) + Bern(1-p)) &= (1-2p) \cdot TV(Bin(T-1, p), Bin(T-1, p) + 1).
\end{align*}
We now write
\begin{align*}
    2 TV(Bin(T-1, p), Bin(T-1, p) + 1) &= \sum_{m=0}^{T} |\Pr[Bin(T-1,p)=m] - \Pr[Bin(T-1,p) = m-1]|\\
    &= \sum_{m=0}^{T} |{T-1 \choose m} p^{m} (1-p)^{T-1-m} - {T-1 \choose m-1} p^{m-1} (1-p)^{T-m}|\\
    &= \sum_{m=0}^{T} \frac{1}{Tp(1-p)} \cdot {T \choose m} p^m (1-p)^{T-m} | p(T-m) - (1-p)m|\\
    &= \sum_{m=0}^{T} \frac{1}{Tp(1-p)} \cdot {T \choose m} p^m (1-p)^{T-m} | pT - m|\\
\end{align*}
Letting $f(m)$ denote $pT-m$, we can thus write
\begin{align*}
     2 TV(Bin(T-1, p), Bin(T-1, p) + 1) &= \frac{1}{Tp(1-p)} \E_{X \sim Bin(T, p)} [|f(X)|].
\end{align*}
The value $|f(X)|$ is the absolute deviation of a binomial random variable $Bin(T,p)$ from its expectation. An exact formula for the mean absolute deviation of a binomial was given by ~\citet{DeMoivre1730} (see ~\citet{DiaconisZ91} for a historical perspective on this formula). De Moivre showed that
\begin{align*}
    \E_{X \sim Bin(T, p)} [|X - Tp|] &= 2\ceil{Tp} (1-p) \Pr[Bin(T,p) = \ceil{Tp}]\\
    &= 2 {T \choose \ceil{Tp}} \ceil{Tp}  p^{\ceil{Tp}}(1-p)^{T-\ceil{Tp}+1}. 
\end{align*}
Plugging this bound gives the claimed equality. To prove the upper bound, we use Jensen's inequality:
     \begin{align*}
    \sqrt{\E_{X \sim Bin(T, p)} [|f(X)|]}&\leq  \sqrt{\E_{X \sim Bin(T, p)} [f(X)^2]}.\label{eq:mad_variance}
\end{align*}
Now observe that 

\begin{align*}
\E_{X \sim Bin(T, p)} [f(X)^2] & = \E_{X \sim Bin(T, p)} [(X-pT))^2]\\
&= p(1-p)T.
\end{align*}
It follows that
\begin{align*}
 2 TV(Bin(T-1, p), Bin(T-1, p) + 1) &\leq \sqrt{\frac{1}{p(1-p)T}},
\end{align*}
so that
\begin{align*}
TV(Bin(T, p), Bin(T-1, p) + Bern(1-p)) &\leq (1-2p) \sqrt{\frac{1}{4p(1-p)T}}.
\end{align*}
\citet{BerendK13} show that the bound from Jensen's inequality above (which is the only inequality in this proof) is tight up to a $\sqrt{2}$ factor for all $p \in (\frac 1 T, 1- \frac 1 T)$. When $p < \frac 1 T$, the mean absolute deviation is bounded by $2Tp$ and this is tight up to a factor of $e$. A symmetric bound holds for $p>1-\frac 1 T$.
\end{proof}

\end{document}